\titleformat{\subsection}[runin] 
  {\normalfont\bfseries}         
  {\thesubsection}               
  {1em}                          
  {}                             
  [.]
\newcommand{\cA}{\mathcal{A}}
\newcommand{\bG}{\mathbb{G}}
\newcommand{\ora}[1]{\overrightarrow{#1}}
\DeclareMathOperator{\worstcase}{Worst-Case}
\DeclareMathOperator{\universal}{Universal}
\DeclareMathOperator{\daguniversal}{Order}
\DeclareMathOperator{\Root}{root}
\DeclareMathOperator{\runtime}{Running-Time}
\theoremstyle{definition}
\newtheorem{definition}{Definition}
\theoremstyle{plain}
\newtheorem{claim}{Claim}
\newtheorem{proposition}{Proposition}
\newtheorem{theorem}{Theorem}
\newtheorem{lemma}{Lemma}
\theoremstyle{remark}
\Crefname{claim}{Claim}{Claims}
\title{Tight Better-Than-Worst-Case Bounds for Element Distinctness and Set Intersection}
\let\@fnsymbol\@arabic
\author{%
  Ivor van der Hoog\thanks{Department of Theoretical Computer Science, IT University of Copenhagen, Denmark}
  \and
  Eva Rotenberg\textsuperscript{1}
  \and
  Daniel Rutschmann\textsuperscript{1}
}
\date{} 
\begin{document}
\title{ Tight Better-Than-Worst-Case Bounds for Element Distinctness and Set Intersection}


\maketitle

\begin{abstract}
\begin{adjustwidth}{-1cm}{-1cm}
The \emph{element distinctness} problem takes as input a list~$I$ of~$n$ values from a totally ordered universe, where pairwise comparisons between values are allowed, and the goal is to decide whether~$I$ contains any duplicates. 
It is a well-studied problem with a classical worst-case~$\Omega(n \log n)$ comparison-based lower bound by Fredman [TCS'76]. 
At first glance, this lower bound appears to rule out any algorithm more efficient than the naive approach of sorting~$I$ and comparing adjacent elements. 
However, upon closer inspection, the~$\Omega(n \log n)$ bound is overly pessimistic. 
For instance, if~$I$ consists of~$n/2$ identical elements and~$n/2$ distinct ones, a median-finding algorithm will, regardless of the input order, find a duplicate in linear time. 
This raises a natural question:

\begin{quote}
\emph{Are there comparison-based lower bounds for element distinctness that are sensitive to the amount of duplicates in the input instance?}
\end{quote}

To address this question, we derive instance-specific lower bounds. For any input instance $I$, we represent the combinatorial structure of the duplicates in~$I$ by an undirected graph~$G(I)$ that connects identical elements. 
Each such graph~$G$ is a union of cliques, and we study algorithms by their worst-case running time over all inputs~$I'$ with~$G(I') \cong G$. 
We establish an adversarial lower bound showing that, for any deterministic algorithm~$\mathcal{A}$, there exists a graph~$G$ and an algorithm~$\mathcal{A}'$ that, for all inputs $I$ with $G(I) \cong G$, is a factor~$O(\log \log n)$ faster than~$\mathcal{A}$. 
Consequently, no deterministic algorithm can be~$o(\log \log n)$-competitive for all graphs~$G$. 
We complement this with an~$O(\log \log n)$-competitive deterministic algorithm, thereby obtaining tight bounds for element distinctness that go beyond classical worst-case analysis.
Subsequently, we 
study the related problem of \emph{set intersection}. 
We show that no deterministic set intersection algorithm can be $o(\log n)$-competitive, and provide an $O(\log n)$-competitive deterministic algorithm. 
This shows a separation between element distinctness and the set intersection problem.
\end{adjustwidth}
\end{abstract}

\paragraph{Funding.} This work was supported by the the VILLUM Foundation grant (VIL37507) ``Efficient Recomputations for Changeful Problems'' and by the Carlsberg Foundation, grant CF24-1929.

\pagenumbering{gobble}
\newpage

\setcounter{page}{0}





\pagenumbering{arabic}
\section{Introduction}

The \emph{element distinctness} problem takes as input a list $I$ of $n$ values, where pairwise comparisons between values are allowed. The goal is to decide whether $I$ contains any duplicates. 
In the closely related problem of 
\emph{set intersection},  the input consists of two lists $A$ and $B$, each containing $n$ values, and the task is to determine whether there exist $(a, b) \in A \times B$  such that $a = b$.

\paragraph{Complexity.}
The algorithmic complexity of element distinctness depends on the operations permitted.  
The most general model for this problem is the \emph{comparison-based} model, in which one may perform pairwise comparisons between elements of $I$, but cannot otherwise manipulate its values. 
In this model, the worst-case complexity depends on the type of comparisons that are allowed. 
For instance, if a comparison between two elements $a, b \in I$ only reveals whether $a = b$ or $a \neq b$, then element distinctness has a trivial $\Theta(n^2)$ worst-case lower bound. 
The canonical comparison-based model assumes that the input $I$ consists of elements from a totally ordered set. That is, for any $a, b \in I$, either they are duplicates ($a=b$), or they satisfy $a < b$ or $b < a$. 
In this model, Fredman~\cite[TCS'76]{fredman_how_1976} proved an $\Omega(n \log n)$ information-theoretic lower bound by showing that any correct algorithm must produce a certificate verifying that no duplicates exist. 
If $I$ contains no duplicates, such a certificate defines a total order on $I$. 
Since there are $n!$ distinct total orders, any comparison-based algorithm requires $\Omega(n \log n)$ comparisons. 
This bound is tight: sorting $I$ and checking adjacent elements suffices to decide distinctness.
However, this lower bound is rather coarse, and later work aimed to obtain more refined bounds. 
Ajtai~\cite[Combinatorica'88]{Ajtai1988} extended the analysis by incorporating space usage. 
Let $S$ denote the number of available RAM cells. 
Ajtai proved a comparison-based lower bound of $\Omega(f(n)/S)$, where $f(n)$ is superlinear. 
Beame, Borodin, Saks, and Skyum~\cite[STOC'89]{BeameBorodinSaksSkyum1991} refined this to $\Omega(n^{3/2}/\sqrt{S})$. 
Independently, Yao~\cite[FOCS'88]{YaoNear-Optimal} obtained a comparison-based lower bound, conjectured to be near-optimal, of $\Omega\bigl(n^{2 - 1/\sqrt{\log n}} / S\bigr)$. 
Patt-Shamir and Peleg~\cite[TCS'93]{pattshamir1993Space} show an $\Omega(n^{3/2}/\sqrt{S})$ lower bound for set intersection.
Further bounds are known for the \emph{$k$-set intersection} problem, where the number of sets $k$ is part of the input. 
P\u{a}tra\c{s}cu~\cite[STOC'10]{Patrascu} reduced 3SUM to $k$-set intersection, yielding a conditional near-quadratic lower bound. 
Kopelowitz, Pettie, and Porat~\cite[SODA'16]{Kopelowitz2016Higher} strengthened this result by excluding strongly subquadratic algorithms. 
Chakrabarti, Cormode, and McGregor~\cite[STOC'08]{Chakrabarti2008} established lower bounds for $k$-set intersection in the communication and streaming models, showing that $\Omega(n)$ bits of space are required for single-pass streaming algorithms.

\paragraph{Faster algorithms in specialized models of computation.}
Given the $\Omega(n \log n)$ comparison-based lower bound, faster algorithms for element distinctness have been achieved only in less general models of computation. 
Ostlin and Pagh~\cite[STOC'03]{Ostlin2003Hashing} showed that element distinctness can be solved in expected linear time and space when hashing is allowed. 
Beame, Clifford, and Machmouchi~\cite[FOCS'13]{Beame2013FOCS} revisited the space--time trade-off introduced by Ajtai~\cite{Ajtai1988} and showed that, with access to hashing and integer arithmetic, one can achieve a randomized trade-off of $O(n^{3/2}/\sqrt{S})$. 
Other improvements rely on models that permit faster sorting. 
Assuming that $I$ consists of integers and that the underlying word-RAM supports constant-time integer arithmetic, Han~\cite[STOC'02]{Han2002Deterministic} gave a deterministic $O(n \log \log n)$ sorting algorithm, which directly yields an $O(n \log \log n)$ algorithm for element distinctness. 
Han and Thorup~\cite{Han2002Random} further obtained a randomized $O(n \sqrt{\log \log n})$ algorithm for the same problem. 
Finally, P\u{a}tra\c{s}cu and Demaine~\cite[SICOMP'06]{Patrascu} proved an $\Omega(\log n)$ lower bound for dynamic element distinctness and dynamic set intersection in the cell-probe model.

\paragraph{Problem statement.}
At first glance, Fredman’s $\Omega(n \log n)$ lower bound appears to rule out faster comparison-based algorithms. 
Subsequent work has therefore focused on proving lower bounds in other settings or under more specialized computational models. 
However, upon closer examination, the $\Omega(n \log n)$ bound is overly coarse.  
Consider an input instance $I$ in which all elements are identical.  \newpage

In this case, any algorithm can correctly terminate after a single comparison. 
Consider the next example, where half of the elements in $I$ are identical and the remaining half are distinct. 
Here, an adversarial argument yields an $\Omega(n)$ lower bound: any correct algorithm must verify its output by identifying a duplicate that serves as a \emph{witness}, and an adversary can force it to process all $n/2$ distinct elements before encountering any duplicate. 
This linear lower bound can be matched by any algorithm that, for instance, finds the median of $I$, regardless of the order in which the input is presented.
These observations naturally raise the question:
\emph{Are there lower bounds for element distinctness that are sensitive to the amount of duplicates in the input instance?}
To answer this question, we examine several frameworks that refine worst-case analysis by considering algorithmic performance under input-specific parameters. 
We explore a variety of such frameworks, and provide  tight bounds.

\paragraph{Better-than-worst-case optimality.}
There exists a variety of notions for defining better-than-worst-case complexity, and a long history of results achieving such bounds. 
All these approaches share a common theme: if the worst-case running time of an algorithm is defined as the maximum over all possible input instances, can we instead derive lower bounds from the \emph{input instance}?

The strictest interpretation of this idea is \emph{true instance optimality} (sometimes referred to as $O(1)$-competitiveness). 
An algorithm $\cA$ is \emph{instance-optimal} if there exists a constant $c$ such that, for all sufficiently large input instances $I$ and for all algorithms $\cA'$, we have
$
\runtime(\cA, I) \leq c \cdot \runtime(\cA', I).
$
Perhaps the most famous instance-optimality conjecture is due to Sleator and Tarjan~\cite{Sleator1985self}, who proposed that there exists a constant $c$ and a BST algorithm $\cA$ such that, for every sufficiently long sequence $I$ of predecessor queries, the total running time $\runtime(\cA, I)$ of answering all queries in $I$ satisfies
$
\runtime(\cA, I) \leq c \cdot \runtime(\cA', I)
$
for every other BST algorithm $\cA'$. 
This conjecture remains open, but Demaine et al.~\cite[FOCS'04]{Demaine2004Dynamic} showed a BST algorithm $\cA$ that is $O(\log \log n)$-instance-competitive. Formally, they showed that, for any sufficiently long sequence $I$ of BST queries and every other BST $\cA'$, 
$
\runtime(\cA, I) \leq O(\log \log n) \cdot \runtime(\cA', I).
$

True instance optimality is almost always provably unattainable. 
This also holds for element distinctness. 
Consider the family of quadratic-time algorithms $\{ \cA_{i,j} \}$ that perform all pairwise comparisons of elements in $I$ in arbitrary order, where $\cA_{i,j}$ starts by comparing $I[i]$ and $I[j]$. 
For any deterministic algorithm $\cA$ and constant $c$, there exists an input $I$ and pair $(i,j)$ such that $I[i] = I[j]$ is the only duplicate and, after $c$ comparisons, $\cA$ has not yet compared $I[i]$ and $I[j]$. 
However, the algorithm $\cA_{i,j}$ finds the duplicate after one comparison. 
In fact, this construction shows that no $o(n \log n)$-instance-competitive algorithm can exist: for any algorithm, there exists an input $I$ and pair $(i,j)$ such that $I[i] = I[j]$ is the only duplicate, and after $\log (n!) - 1$ comparisons, $\cA$ cannot certify whether $I[i] < I[j]$, $I[i] > I[j]$, or $I[i] = I[j]$. 
Yet, $\cA_{i,j}$ terminates after one comparison. This calls for a notion of optimality that is more fine-grained than worst-case, but coarser than instance optimality.

\paragraph{Output-optimality.}
Fix some algorithmic problem and a value $k$, and denote by $\mathbb{I}_{n, k}$ all inputs of size $n$ that have an output of size $k$. 
For a fixed algorithm $\cA$ and integer pair $(n, k)$, we define the output-sensitive running time as
$
\textnormal{Output}(\cA, n, k) := \max\limits_{I \in \mathbb{I}_{n, k}} \runtime(\cA, I)$.
We say that an algorithm $\cA$ is \emph{output-optimal} if there exists a constant $c$ such that, for all $k$ and sufficiently large $n$, we have, for all $\cA'$,
$
\textnormal{Output}(\cA, n, k) \leq c \cdot \textnormal{Output}(\cA', n, k)$.
A classical output-optimal result is the $O(n \log k)$-time convex hull algorithm by Kirkpatrick and Seidel~\cite[SICOMP'86]{kirkpatrick1986ultimate}.

A related but stronger notion is \emph{instance-optimality in the order-oblivious model}, a term proposed by Afshani, Barbay, and Chan~\cite[FOCS'09]{afshani2009instance}. 
For a fixed input $I$, let $\Pi(I)$ denote all permutations of $I$. 
For an algorithm $\cA$ and input $I$, we define the order-oblivious running time as
\[
\textnormal{Order}(\cA, I) := \max\limits_{I' \in \Pi(I)} \runtime(\cA, I'). \]
\newpage

\noindent
An algorithm $\cA$ is \emph{instance-optimal in the order-oblivious model} if there exists a constant $c$ such that, for all sufficiently large $I$, $
\textnormal{Order}(\cA, I) \leq c \cdot \textnormal{Order}(\cA', I)$, 
for all algorithms $\cA'$. 
They proved that the algorithm in \cite{kirkpatrick1986ultimate} is optimal in this model. 
Note that this notion of optimality implies output-optimality.%

\paragraph{Universal optimality.}
In recent years, the term \emph{universal optimality} has been proposed as an umbrella concept that captures many notions of better-than-worst-case analysis. 
Haeupler, Wajc, and Zuzic~\cite[STOC'21]{Haeupler2021UniversalDistributed} introduced this concept in a distributed setting, observing that the running time of their algorithm $\cA$ depends on two aspects of the input: the graph topology $G$ of the distributed system and the edge weights $w$. 
The worst-case running time then is:
\[
\worstcase(\cA, n) := \max_{n\textnormal{-vertex graphs } G} \, \max_{\textnormal{weightings } w \textnormal{ of } G} \, \runtime(\cA, G, w).
\]
They argued that this double maximum provides an overly pessimistic view of an algorithm’s performance and proposed a more refined analysis. 
Rather than maximizing over both parameters, they fix the first variable and define the \emph{universal running time} as the maximum over the second:  
\[
\textnormal{For a fixed } n\textnormal{-vertex graph } G, \quad  
\operatorname{Universal}(\cA, G) := \max_{\textnormal{weightings } w \textnormal{ of } G} \runtime(\cA, G, w).
\]

For fixed $G$, we define the \emph{universal lower bound} as the smallest universal running time over all algorithms~$\cA'$, and we say that $\cA$ is universally optimal if, for all sufficiently large $G$, it matches this lower bound up to constant factors.

\paragraph{A universal framework.}
Haeupler, Hladík, Rozho\v{n}, Tarjan, and T\v{e}tek~\cite[FOCS'24]{Haeupler2024Dijkstra} extended this notion beyond distributed settings, observing that it can be generalized into a broader framework. 
If running time depends on two variables $X$ and $Y$, such that:
\[
\operatorname{Worst-case}(\cA, n) = \max\limits_{\textnormal{variable } X \textnormal{ of size } n } \quad  \max\limits_{\textnormal{inputs } Y \textnormal{ derived from } X} \runtime(\cA, X, Y), \]

then we can define the universal running time by fixing $X$, and considering the worst-case $Y$:
\[
\textnormal{For a fixed input } X, \quad 
\textnormal{Universal}(\cA, X) := \max_{\textnormal{inputs } Y \textnormal{ derived from } X} \runtime(\cA, X, Y).
\]

An algorithm~$\cA$ is then said to be \emph{universally optimal} if there exists a constant $c$ such that, for all sufficiently large $X$ and all algorithms $\cA'$, $
\textnormal{Universal}(\cA, X) \leq c \cdot \textnormal{Universal}(\cA', X)$.
Universal optimality is challenging to obtain, since the algorithm $\cA$ must be efficient for all $X$, but competes against algorithms $\cA'$ that can be uniquely designed for inputs sharing that same fixed $X$.
An algorithm $\cA$ is \emph{$f(n)$-competitive} if, for sufficiently large $n$, all $X$ of size $n$, and all $\cA'$, we have $
\universal(\cA, X) \le f(n) \cdot \universal(\cA', X)$.
We say that an algorithm is \emph{$O(f(n))$-competitive} if it is $g(n)$-competitive for some $g \in O(f)$. 
A universally optimal algorithm is $O(1)$-competitive.

Haeupler, Hladík, Rozho\v{n}, Tarjan, and T\v{e}tek~\cite[FOCS'24]{Haeupler2024Dijkstra} studied the single-source shortest path (SSSP) problem. 
They observed that the running time of an SSSP algorithm~$\cA$ depends on two aspects of the input: the underlying directed graph~$G$ and its edge weights~$w$. 
Consequently, the worst-case running time can be expressed as a double maximum over both the graph structure and the weighting. 
The universal running time is then defined analogously: 
\[
\textnormal{For a fixed } n\textnormal{-vertex directed graph } G, \quad  
\textnormal{Universal}(\cA, G) := \max_{\textnormal{weightings } w \textnormal{ of } G} \runtime(\cA, G, w).
\]

There is a long line of research on designing universally competitive algorithms for sorting under partial information. 
Here, the input is a partial order~$P$ (a directed, acyclic graph), and the goal is to determine a hidden underlying linear extension~$L$ of~$P$ using the fewest possible comparisons. 
The running time of a partial-sorting algorithm~$\cA$ depends on both the input poset~$P$ and the linear extension~$L$. 
Thus, the universal running time of an algorithm can be defined as \newpage 
\[
\textnormal{For a fixed DAG } P, \quad  
\textnormal{Universal}(\cA, P) := \max_{\textnormal{linear extensions } L \textnormal{ of } P} \runtime(\cA, P, L).
\]

The design of such better-than-worst-case algorithms predates the term \emph{universal optimality}, and was proposed by Kahn and Saks \cite[Order'84]{kahn_balancing_1984}.
There was a steady stream of increasingly competitive algorithms by Kahn and Kim \cite[STOC'92]{kahn_entropy_1992} and later Cardinal et al~\cite[STOC'10]{cardinal_sorting_2010} until Haeupler, Hladík, Iacono, Rozho\v{n}, Tarjan, and T\v{e}tek~\cite[SODA'25]{Haeupler2025Sorting} (and independently van der Hoog and Rutschmann~\cite[FOCS'24]{Hoog2024Tight}) gave $O(1)$-competitive algorithms. 

Even the notion of Afshani, Barbay, and Chan~\cite[FOCS'09]{afshani2009instance} fits into this framework, since running times depends on the input $P$ and on the order $I_P$ in which the input is received. 
The resulting universal running time, defined by fixing the input $P$ and taking the worst-case running time over all orders $I_P$ of $P$, is thereby equivalent to the order-oblivious running time.
Universal optimality thus provides a general framework for expressing instance-sensitive lower bounds.
Since formalizing this framework, it has been successfully applied across a variety of settings~\cite{Hoog2024Tight, Zuzic2022Universally, Haeupler2021UniversalDistributed, Haeupler2024Dijkstra, Haeupler2025Sorting, hladik2025FORC, Hoog2025SimplerSorting, Hoog2025SimplerDAG, Hoog2025Convex}.

\subsection{Better-than-worst-case element distinctness}
We study better-than-worst-case algorithms for element distinctness in the comparison-based model of Fredman~\cite{fredman_how_1976} and Yao~\cite{YaoNear-Optimal}. 
We apply the universal optimality framework to derive lower bounds when the input contains duplicates.

Universal optimality fixes the combinatorial structure of the input. 
For element distinctness, we identify two natural ways to capture this structure. 
Our first proposal is to describe it through the \emph{duplicates} in $I$. 
For any instance $I$, let $G(I)$ be the labeled graph with vertex set $\{1, \ldots, n\}$ where the edge $\{i,j\}$ exists if and only if $I[i] = I[j]$. 
Thus, $G(I)$ is the union of disjoint cliques. 
We can then fix a graph topology $G$ and consider the worst-case running time on inputs that have the same topology. 
Formally, let $\bG_n$ denote the family of all $n$-vertex graphs that are unions of disjoint cliques, and let $\cong$ denote graph isomorphism. Then $
  \worstcase(\cA,n) = \max\limits_{G \in \bG_n}\; \max\limits_{I:\, G(I)\cong G}\; \runtime(\cA,I).$ And, \vspace{-0.1cm} 
  \[
  \text{for fixed } G \in \bG_n, \quad
  \universal(\cA,G) = \max_{I:\, G(I)\cong G}\; \runtime(\cA,I).
\]\vspace{-0.1cm}

This formulation captures the earlier intuition. 
If $I$ contains only one distinct value, then $G(I)$ is a single clique, and the universal lower bound for this $G$ is constant. 
If $I$ contains $n/2$ identical elements and $n/2$ distinct ones, then $G(I)$ consists of a clique of size $n/2$ and $n/2$ isolated vertices, and the universal lower bound for such $G$ is $\Omega(n)$. 
More generally, the universal lower bound for $G \in \bG_n$ can take any value in $[1,\, n \log n]$. 
In our context, we say that an algorithm is $O(f(n))$-universally competitive if there exists a function $g \in O(f)$ such that, for all sufficiently large $n$, for all graphs $G \in \bG_n$, and for all algorithms $\cA'$, we have $
\universal(\cA,G) \leq g(n) \cdot \universal(\cA',G)$.

Our second proposal is to apply universal optimality in a manner which is equivalent to \emph{instance optimality in the order-oblivious model}. 
For any input $I$, we consider the worst-case running time of an algorithm over all permutations $I'$ of $I$. 
Equivalently, this can be expressed in terms of graph topology: every input $I$ induces a directed graph $\overrightarrow{G}(I)$ whose vertices correspond to the elements of $I$, and where there is an arc from $I[i]$ to $I[j]$ if and only if $I[i] < I[j]$. 
Note that for all inputs $I$ of size $n$, the undirected complement of $\overrightarrow{G}(I)$ lies in $\bG_n$. 
Let $\overrightarrow{\bG}_n$ denote the family of all such directed graphs. 
We then have $
  \worstcase(\cA,n) = \max\limits_{\overrightarrow{G} \in \overrightarrow{\bG}_n}\; \max\limits_{I:\, \overrightarrow{G}(I)\cong \overrightarrow{G}}\; \runtime(\cA,I)$, and\vspace{-0.1cm}
  \[
  \text{for fixed } \overrightarrow{G} \in \overrightarrow{\bG}_n, \quad
  \textnormal{Order}(\cA,\overrightarrow{G}) = \max_{I:\, \overrightarrow{G}(I)\cong \overrightarrow{G}}\; \runtime(\cA,I).
\]
To distinguish between our two models, we say that an algorithm is $O(f(n))$-\emph{order-competitive} if there exists a function $g \in O(f)$ such that, for all sufficiently large $n$, for all $\overrightarrow{G} \in \overrightarrow{\bG}_n$, and all algorithms $\cA'$, 
\[
 \textnormal{Order}(\cA,\overrightarrow{G}) \leq g(n) \cdot \textnormal{Order}(\cA',\overrightarrow{G}).
\]
\newpage

\paragraph{Contributions.}
Having established these natural and well-studied notions of better-than-worst-case optimality, we now present our results. 
We first prove an adaptive adversarial lower bound showing that no deterministic $o(\log \log n)$-competitive algorithm exists. 
We then match this bound up to constant factors by providing an $\Theta(\log \log n)$-competitive algorithm. 
Similarly, we show an adversarial lower bound proving that no deterministic $o(\log n)$-order-competitive algorithm exists, and we match this bound up to constants by giving an $\Theta(\log n)$-order-competitive algorithm. 

We further extend the approach in two directions. 
First, we consider a variant of universal optimality that was originally proposed by Cardinal et al.~\cite[STOC'10]{cardinal_sorting_2010} who consider the following question: ``If one is allowed to preprocess the universally fixed graph $G$, can one then design competitive algorithms?''
We show that, for any graph $G \in \bG_n$, one can preprocess $G$ in $O(m)$ time ($m$ is the number of cliques in $G$) to obtain an $O(1)$-competitive algorithm. 
Second, we study the related \emph{set intersection} problem. 
Here we obtain a strict separation: for set intersection, there is no $o(\log n)$-competitive algorithm, regardless of whether the fixed quantity is the undirected duplicate graph or the induced DAG. 
We conclude with a simple $O(\log n)$-competitive algorithm for these settings, yielding a detailed classification of the difficulty of element distinctness and set intersection under various inputs. 

We thereby present a comprehensive and extensive study of the complexity of the element distinctness and set intersection problems for when the input contains duplicates. 
Our bounds cover a variety of popular models for better-than-worst-case analysis and are tight.

\section{Preliminaries}

We study the \emph{element distinctness} and \emph{set intersection} problems in a comparison model of computation.
For element distinctness, the input is a list $I$ of $n$ elements drawn from some totally ordered set $U$. 
Comparing two elements $x$ and $y$ returns one of ``$x < y$'', ``$x = y$'', or ``$x > y$'', and this is the only operation permitted on the elements of $I$. 
The goal is to output any pair of equal elements, or report that no such pair exists, using as few comparisons as possible.  

In the set intersection problem, the input consists of two lists $A$ and $B$, each containing $n$ elements from the ordered universe $U$. 
The task is to output a pair of elements $a \in A$ and $b \in B$ such that $a = b$, or report that no such pair exists.  

\paragraph{Algorithms and runtime.}
Our lower bounds uniquely count the number of comparisons required before the algorithm can find a witness.  
Naturally, the number of comparisons performed provides a lower bound on the total running time. Our upper bound algorithms require additional operations such as reading the input. However, if the input $I$ is provided as a stream, the running time our algorithms is dominated by the number of comparisons they perform.  
Accordingly, all our bounds can be expressed in terms of algorithmic running time.  
We assume that algorithms are \emph{deterministic}.  
Because the output size is a constant witness, randomized algorithms may behave very differently.
Consider the example where $I$ consists of $\frac{n}{2}$ copies of the same value and $\frac{n}{2}$ distinct values.  
An adversarial argument gives a deterministic lower bound of $\Omega(n)$ comparisons, as the algorithm can be forced to inspect all distinct elements first.
However, an algorithm that randomly selects $(i, j)$ and compares $I[i]$ and $I[j]$ terminates in expected constant time.

\paragraph{Undirected graphs and quantities.}
An instance $I$ of element distinctness induces an undirected graph $G(I)$ on $n$ vertices.  
The vertex set is $[n] = \{1, \dots, n\}$, and for $x, y \in [n]$, there is an edge $xy$ if $I[x] = I[y]$.  
Similarly, an instance $(A, B)$ of set intersection induces an undirected graph $G(A, B)$ on $2 n$ vertices.  
All connected components of $G(I)$ and $G(A,B)$ are cliques. 
We refer to these components the \emph{clusters} of $I$, or of $(A, B)$.  
The \emph{size} of a cluster is its number of vertices.
We denote by $\bG_n$ the set of all such $n$-vertex graphs. 
For a fixed $G \in \bG_n$ and integer $L$, we define $C(L)$ as the \emph{total size} of all clusters of size smaller than $L$, and $D(L)$ as the \emph{number} of clusters of size at least $L$.

\newpage
\section{Algorithms for element distinctness}
We present two comparison-based procedures for element distinctness, which we run in parallel, terminating as soon as either procedure completes.
The first, \texttt{Block Sorting}, operates on fixed-size batches of elements.  Given a
parameter $k$, it repeatedly sorts a block of $k$ elements and halts if a
duplicate appears within the block; otherwise it discards the block and
continues (Algorithm~\ref{algo:block_sorting}).
The second, \texttt{Median Recursion}, uses a parameter $L$.  It
recursively partitions the input at the median and prunes any recursive
branch whose subproblem has fewer than $L$ elements
(Algorithm~\ref{algo:median_recursion}).

Intuitively, \texttt{Median Recursion} is preferable when the universal
running time lies between $O(n)$ and $O(n\log n)$, whereas
\texttt{Block Sorting} can terminate sooner on instances where the
universal running time is sublinear.
To analyze the universal running time of these algorithms, we fix a graph $G$ and an arbitrary value $L \geq 2$. The graph $G$ induces two quantities:
$C(L)$ is the \emph{total size} of clusters of size less than $L$,
and $D(L)$ is the \emph{number} of clusters of size at least $L$.
Let $I$ be an input with $G(I)\cong G$.
We now fix an arbitrary $L$ and run \texttt{Median Recursion} on $I$ with this
parameter.  In addition, suppose $D(L)$ is known.  We run
\texttt{Block Sorting} on the same input with $k = 2D(L)$.  With this
choice, the running times of both algorithms can be expressed in terms
of $n$, $C(L)$, and $D(L)$.
If $G$ were known in advance, we could choose $L$ to minimize the
overall running time.  We show that, even without knowledge of $G$, it
suffices to try $O(\log\log n)$ values of $k$ and $L$ in parallel to
obtain a universally $O(\log\log n)$-competitive algorithm.

\textbf{Analysis of Block Sorting.}
We analyze the universal running time of Algorithm~\ref{algo:block_sorting}.
Fix $G\in\bG_n$ and $L \ge 2$ with $C(L) < n/2$. Then the values $C(L)$ and $D(L)$ are fixed and we assume that $k=2D(L)$.

\begin{lemma}\label{lemm:block_sorting}
\texttt{Block Sorting} finds a duplicate after $O\big((C(L)+D(L))\max\{1, \log D(L)\}\big)$ comparisons.
\end{lemma}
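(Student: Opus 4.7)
The plan is to bound the total cost as (number of blocks processed) times (cost per block). Sorting a block of $k = 2D(L)$ elements takes $O(k \log k) = O(D(L)\max\{1,\log D(L)\})$ comparisons, where the $\max$ absorbs the degenerate case $D(L)=1$. The substantive task is therefore to bound the number of blocks that the algorithm can discard before finding a duplicate.

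The key combinatorial observation is a one-line pigeonhole argument. A block is discarded only if its $k = 2D(L)$ elements are pairwise distinct, meaning they come from $k$ distinct clusters of $G$. Because only $D(L)$ clusters have size at least $L$, at least $k - D(L) = D(L)$ of the selected clusters must be small (size $<L$), and each small cluster contributes exactly one element to the block. Hence every discarded block consumes at least $D(L)$ fresh elements drawn from clusters of size $<L$.

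The total number of such small-cluster elements is $C(L)$, so after at most $\lfloor C(L)/D(L)\rfloor$ discarded blocks the next block must contain two elements of the same cluster and the algorithm halts. In total the algorithm processes at most $C(L)/D(L) + 1$ blocks, and multiplying by the per-block cost yields the claimed bound $O((C(L)+D(L))\max\{1,\log D(L)\})$.

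The one subtle point is verifying that the input is long enough to reach this guaranteed duplicate-containing block; this is where the hypothesis $C(L)<n/2$ enters. If the algorithm ran through all $\lfloor n/k\rfloor$ blocks without stopping, the same pigeonhole count would force $(n/k)\cdot D(L) \leq C(L)$, i.e.\ $C(L) \geq n/2$, a contradiction. I expect this bookkeeping around the ``final'' block to be the only mild obstacle; the central pigeonhole step is immediate once $k=2D(L)$ is substituted.
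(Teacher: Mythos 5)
Your proposal is correct and follows essentially the same route as the paper: the same pigeonhole argument that each discarded block of size $k=2D(L)$ consumes at least $D(L)$ fresh elements from clusters of size $<L$, giving at most $C(L)/D(L)+1$ blocks at $O(D(L)\max\{1,\log D(L)\})$ comparisons each, and the same counting argument that exhausting the input without a duplicate would force $C(L)\ge n/2$. The "bookkeeping around the final block" you flag is handled in the paper by noting that the final sort (of between $k$ and $2k$ leftover elements) also contributes at least $D(L)$ small-cluster elements, so at least half of all inspected elements are small-cluster elements.
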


\begin{proof}
The algorithm can terminate only (i) by returning a duplicate, or (ii) by giving up.
We first show that, with $k=2D(L)$ and $C(L)<n/2$, the algorithm, if it does not give up, will return a duplicate after the appropriate number of comparisons.
We then show that it will not give up. 

Assume for contradiction that the algorithm performs $1+\lceil C(L)/D(L)\rceil$
iterations without terminating. In iteration $i$, let
$S_i$ be the set of elements inspected by the algorithm, then $|S_i|=k=2D(L)$. Since no duplicate is found within $S_i$, all its elements must belong to distinct clusters. Since there
are only $D(L)$ clusters of size $\ge L$, at least $D(L)$ elements of
$S_i$ must lie in clusters of size $<L$. After $1+\lceil C(L)/D(L)\rceil$ iterations, this accounts for $>C(L)$ distinct input elements that all lie in
clusters of size $<L$ -- a contradiction. Hence, a duplicate is found within at
most $1+\lceil C(L)/D(L)\rceil$ iterations if the algorithm does not give up. 
Each iteration sorts $k$ elements, requiring $O(k\log k)=O(D(L)\max\{1, \log D(L)\})$
comparisons. With at most $1+\lceil C(L)/D(L)\rceil$ iterations, the total number of comparisons is
$O\big((C(L)+D(L))\max\{1, \log D(L)\}\big)$.  
\begin{figure}[H]
  \centering
  \caption{Our algorithms \texttt{Block Sorting} and \texttt{Median Recursion}  for element distinctness. 
  }
  \begin{subfigure}{0.48\linewidth}
    \caption{\texttt{Block Sorting}} \label{algo:block_sorting}
    \centering
    \begin{algorithmic}[1]
       \Require Input $I$, integer $k \ge 1$
    \While{$|I| \ge 2k$}
      \State $S \gets k \text{ arbitrary elements from } I$
      \State Sort $S$
      \State If $S$ contains a duplicate, \textbf{return} it
      \State $I \gets I - S$
    \EndWhile
    \State Sort $I$
    \State If $I$ contains a duplicate, return it
    \State Give up and \textbf{terminate}.
    \end{algorithmic}
  \end{subfigure}\hfill
  \begin{subfigure}{0.48\linewidth}
    \caption{\texttt{Median Recursion}}\label{algo:median_recursion}
  \captionsetup{position=top}
    \centering
    \begin{algorithmic}[1]
           \Require Input $I$, integer $L \ge 1$
    \If {$|I| \ge L$}
      \State $m \gets \operatorname{median}(I)$
      \State Check if $m$ occurs multiple times in $I$
      \State If so, \textbf{return} $m$
      \State $\operatorname{median-recursion}(\{x \in I \ |\ x < m\}, L)$
      \State $\operatorname{median-recursion}(\{x \in I \ |\ x > m\}, L)$
    \EndIf
    \State Give up and \textbf{terminate}
    \end{algorithmic}
  \end{subfigure}
  \vspace{-2em}
\end{figure}

\newpage
Finally, suppose for the sake of contradiction that $C(L)<n/2$ but the algorithm still gives up (i.e., the algorithm processes all
elements without ever finding a duplicate inside a block of size $k$). In every iteration
and in the final sorting step,  at least $D(L)$ inspected elements must come
from clusters of size $<L$, Hence, at least  half of all inspected
elements lie in clusters of size $<L$, implying $C(L)\ge n/2$ -- a contradiction.
\end{proof}

\textbf{Analysis of Median Recursion.}
We analyze the universal running time of Algorithm~\ref{algo:median_recursion}.
Fix $G\in\bG$ and some $L \ge 2$. Let $C(L)$ be the total size of clusters of size $<L$ and
let $D(L)$ be the number of clusters of size $\ge L$. Suppose that $C(L) < n$.

\begin{lemma}\label{lemm:median_recursion}
\texttt{Median Recursion} finds a duplicate after  $O\big(n + C(L)\max\{1,\log(C(L)/L)\}\big)$ comparisons.
\end{lemma}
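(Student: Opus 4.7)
My plan is to bound the total cost of \texttt{Median Recursion} by decomposing its depth-first recursion tree into the root-to-termination path and the collection of left-sibling subtrees that DFS fully exhausts before descending toward the terminating node. Since $C(L)<n$, at least one cluster has size $\ge L$, so a duplicate exists and the algorithm must terminate; let $u^*$ denote the first node (in DFS order) at which a duplicate is returned. Each processed node $u$ with subproblem $I_u$ costs $O(|I_u|)$ comparisons (linear-time median selection, duplicate check, and splitting), so the total number of comparisons is $\sum_u O(|I_u|)$ over processed $u$.

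I would partition the processed nodes into (A) the path $P$ from the root to $u^*$, and (B) the subtrees $T'$ rooted at the left child of each ancestor $a$ of $u^*$ whose right subtree contains $u^*$. Each such $T'$ is fully exhausted by the recursion without returning a duplicate, since otherwise that earlier duplicate would contradict the choice of $u^*$. Along $P$, each child's subproblem has size at most half of its parent's, so $\sum_{u\in P}|I_u|$ is a geometric series summing to $O(n)$.

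The key structural claim---and the main obstacle---is that no subtree $T'$ in (B) can contain a cluster of size $\ge L$. The argument rests on two observations: (i) at each recursion step, all copies of a given value compare the same way to the median, so any cluster either equals the median (and is returned) or is passed intact to one of the two recursive calls; and (ii) the subproblem size at least halves at each level. Consequently, a cluster of size $s \ge L$ initially residing in a subproblem of size $t$ remains an intact contiguous block of ranks whose enclosing subproblem shrinks to size at most $2s-1$ after $O(\log(t/s))$ levels, at which point the median is forced into the cluster. Since $s \ge L$, this happens before the base case $|I|<L$ is reached, so the algorithm would return a duplicate inside $T'$, contradicting that $T'$ was exhausted without one.

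It follows that every element processed within subtrees in (B) lies in a cluster of size $<L$. Because distinct subtrees in (B) handle pairwise disjoint subsets of the input, the root sizes $t_{v_i}$ of these subtrees sum to at most $C(L)$. The standard recurrence $f(t) = O(t) + 2f(t/2)$ with base case $f(t) = 0$ for $t<L$ bounds the cost of full exhaustion by $O(t_{v_i}\max\{1,\log(t_{v_i}/L)\})$. Using $t_{v_i}\le C(L)$, we get $\log(t_{v_i}/L)\le\log(C(L)/L)$, so summing over (B) gives $O(C(L)\max\{1,\log(C(L)/L)\})$. Combining with the $O(n)$ bound on (A) yields the claimed $O\big(n + C(L)\max\{1,\log(C(L)/L)\}\big)$.
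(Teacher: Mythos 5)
Your proof is correct and follows essentially the same route as the paper's: decompose the explored recursion tree into the root-to-termination path (bounded by $O(n)$ via geometric decay) and the fully exhausted subtrees, observe that intact clusters of size $\ge L$ would force an earlier detection so those subtrees contain only elements of clusters of size $<L$ with total size at most $C(L)$, and charge $O(C(L)\max\{1,\log(C(L)/L)\})$ for them. The only (cosmetic) difference is that you account for the exhausted subtrees via a per-subtree recurrence $f(t)=O(t)+2f(t/2)$, whereas the paper sums $O(C(L))$ per level over the $O(\log(C(L)/L))$ levels spanned by the $O(C(L)/L)$ small calls.
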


\begin{proof}
Algorithm~\ref{algo:median_recursion} partitions an input $I'$ by comparing
against the median. Either the algorithm finds an element in $I'$ that equals the median (and the algorithm
terminates), or every cluster in $I'$ remains intact after the partition. 
We refer to a recursive call on $I'$ with $|I'|<L$ as a \emph{small call}.
In a small call, all elements of $I'$ belong to
clusters of size $<L$. Distinct small calls are disjoint, and each small call cannot contain any cluster of size $\geq L$, so the total input size across all small calls is at most $C(L)$.
Moreover, since we partition at the median, there can be at most $O(C(L) / L)$ small calls. 

Suppose that $k$ small calls have terminated before our algorithm terminates, and consider the recursion tree. 
The recursion is depth-first and stops after each small call. 
Hence, the recursion tree has $k$ leafs which form a balanced binary tree $T$ whose height is $O(\log k)$, followed by a path to the root call.
Each call to the \texttt{Median Recursion} algorithm performs $O(|I'|)$ comparisons on its subproblem $I'$. 
Thus, the total number of comparisons done on the path to the root call is $O(n)$. 
The work of calls in $T$ is bounded by the total input size of each level, multiplied
by the number of levels. 
Since all elements in $T$ belong to clusters of size less than $L$, the total input size of each level is at most $C(L)$. 
Since $k \in O(C(L) / L)$, it follows that there are at most 
$O\big(C(L)\max\{1,\log(C(L)/L)\}\big)$ comparisons done across all calls in $T$, which implies the claimed bound. 
Since $C(L)<n$, at least one cluster has size $\ge L$, so the
algorithm must eventually expose a duplicate and terminate.
\end{proof}

\subsection{Clairvoyant and Oblivious Algorithms}
If the universal graph $G$ were known, then for any $L$, we could compute $C(L)$ and $D(L)$ and select $L$ to minimize the running time. This yields a \emph{clairvoyant} algorithm that we claim is universally optimal for a fixed $G$:

\begin{theorem} \label{theo:clair_median}
  For every $G$, there is an algorithm $\cA_G$ for element distinctness with
  \[
    \universal(\cA, G) \in O\Big(\min\Big(\displaystyle \min_{\substack{L\ge 2\\ C(L) < n}} (n + C(L) \log \frac{C(L)}{L}), \min_{\substack{L \ge 1\\ C(L) < \frac{n}{2}}} (C(L) + D(L)) \max\{1, \log D(L)\}\Big)\Big).
  \]
\end{theorem}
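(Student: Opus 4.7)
The plan is to exploit the full knowledge of $G$ by precomputing, for every threshold $L$, the quantities $C(L)$ and $D(L)$, then picking the two thresholds $L_1$ and $L_2$ that minimize the bounds of \Cref{lemm:block_sorting} and \Cref{lemm:median_recursion} respectively. Specifically, the algorithm $\cA_G$ runs \texttt{Block Sorting} with parameter $k = 2D(L_1)$ and \texttt{Median Recursion} with parameter $L_2$ in parallel (interleaving their comparisons round-robin), and terminates as soon as either procedure finds a duplicate or concludes that no duplicate exists.

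First I would argue that the relevant minimizations are well-defined. Since the values $C(L)$ and $D(L)$ change only at the $O(n)$ distinct cluster-size thresholds, the two minima in the statement range effectively over a finite set and can be computed from $G$ in time depending only on $G$ (and hence absorbed into preprocessing). I would then set $L_1$ to be the value of $L$ achieving the second minimum (so that $C(L_1) < n/2$, as required by \Cref{lemm:block_sorting}), and $L_2$ to be the value achieving the first minimum (so that $C(L_2) < n$, as required by \Cref{lemm:median_recursion}).

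Next, by \Cref{lemm:block_sorting}, once \texttt{Block Sorting} is run with $k = 2D(L_1)$, it finds a duplicate within $O((C(L_1)+D(L_1))\max\{1,\log D(L_1)\})$ comparisons, which equals the second term in the theorem's bound. Similarly, by \Cref{lemm:median_recursion}, \texttt{Median Recursion} with parameter $L_2$ terminates within $O(n + C(L_2)\max\{1,\log(C(L_2)/L_2)\})$ comparisons, matching the first term. Since the two procedures are interleaved in a round-robin fashion, $\cA_G$ terminates in a number of comparisons that is at most a constant factor times the minimum of the two, which is the bound claimed in the theorem.

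The only subtlety worth flagging is that the minima in the statement are taken only over $L$ satisfying $C(L) < n$ and $C(L) < n/2$ respectively, so one must check that these minima are attained: this holds whenever $I$ actually contains a duplicate (so that at least one cluster has size $\ge 2$, making $L=2$ a feasible choice in both expressions); if $I$ has no duplicates at all, the universal running time over $G$ is trivially $\Theta(n \log n)$ and both expressions are vacuously larger. I do not anticipate any serious obstacle here; the whole statement reduces to plugging the $G$-optimal choice of $L$ into the two already-established lemmas and using a standard parallel-execution argument.
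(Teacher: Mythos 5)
Your proposal is correct and matches the paper's approach: the clairvoyant algorithm simply computes $C(L)$ and $D(L)$ from $G$, picks the minimizing thresholds, and invokes \Cref{lemm:block_sorting} and \Cref{lemm:median_recursion} (the paper runs whichever of the two is better rather than interleaving, but this is equivalent up to a constant factor). One tiny inaccuracy: a duplicate in $I$ guarantees $C(2)<n$ but not $C(2)<n/2$, so the block-sorting minimum may still range over an empty set; this is harmless because the outer $\min$ then reduces to the median-recursion term alone.
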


Next, we design an algorithm that is oblivious to $G$.
It runs multiple instances of \texttt{Median Recursion} and \texttt{Block Sorting} on the same input $I$, each with a different parameter choice, and stops when any instance terminates with a witness for \texttt{yes}. We prove that, for any input $I$ with induced graph $I(G)$, there exists an instance whose running time asymptotically matches the running time of the clairvoyant algorithm for the graph $G \cong I(G)$.
We run $O(\log\log n)$ instances in parallel. Thus a sequential simulation of this parallel algorithm incurs at most an $O(\log\log n)$ multiplicative overhead over the clairvoyant algorithm.

\begin{theorem} \label{theo:algo_oblivious}
  There is an algorithm $\cA$ for element distinctness with
  \[
    \hspace{-0.5cm}\universal(\cA, G) \in O\Big(\log\log(n) \cdot \min\Big(\displaystyle \min_{\substack{L \ge 2\\ C(L) < n}} \big(n + C(L) \log \frac{C(L)}{L}\big), \min_{\substack{L \ge 1\\ C(L) < \frac{n}{2}}} (C(L) + D(L)) \max\{1, \log D(L)\}\Big)\Big).
  \]
\end{theorem}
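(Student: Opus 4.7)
The plan is to build an oblivious algorithm that runs $O(\log\log n)$ copies of \texttt{Block Sorting} and \texttt{Median Recursion} in parallel on the input, each with a distinct parameter drawn from a fixed grid, and halts as soon as any copy returns a witness (or a designated ``safe'' copy, namely \texttt{Median Recursion} with the smallest $L$, reports distinctness). Round-robin simulation of $O(\log\log n)$ tasks inflates the sequential runtime by a multiplicative $O(\log\log n)$ over the fastest finishing copy, so the theorem follows once we exhibit a grid of $O(\log\log n)$ parameters such that, for every $G\in\bG_n$, at least one copy matches the clairvoyant bound of Theorem~\ref{theo:clair_median} up to a constant factor.

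For the grid I would use the doubly-exponential ladder $L_i=2^{2^i}$ for $i=0,1,\dots,\lceil\log\log n\rceil$, giving $O(\log\log n)$ values that cover $[2,n]$. For each $L_i$ I spawn \texttt{Median Recursion} with parameter $L_i$ and \texttt{Block Sorting} with parameter $k_i=2L_i$. The structural property driving the analysis is that for any target $L^*\in[2,n]$ there exists an index $i$ with $L_i\le L^*\le L_i^2=L_{i+1}$, so $\log L^*\in[\log L_i, 2\log L_i]$. Since the clairvoyant expressions in Theorem~\ref{theo:clair_median} involve only logarithmic factors in $L$, $C(L)/L$, and $D(L)$, approximating $L^*$ by $L_i$ or $L_{i+1}$ changes these logarithms by at most a constant factor.

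The analysis fixes $G$ and an $L^*$ realising the clairvoyant minimum, locates the grid interval $[L_i,L_{i+1}]$ containing $L^*$, and argues via a case split that one of \texttt{Median Recursion} at $L_i$, \texttt{Median Recursion} at $L_{i+1}$, \texttt{Block Sorting} at $k_i$, or \texttt{Block Sorting} at $k_{i+1}$ matches the clairvoyant optimum up to constants. The case split exploits the monotonicity of $C(L)$ in $L$ and the anti-monotonicity of $D(L)$: snapping $L^*$ to a grid endpoint preserves both $C$ and $D$ up to constants unless some cluster-size threshold of $G$ lies strictly inside $(L_i,L_{i+1})$, and in that case a jump in $C$ must be accompanied by a shrinkage in $D$ (or vice versa).

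The main obstacle is precisely this complementarity argument. Both clairvoyant bounds are non-monotone functions of $L$, so a naive doubly-exponential substitution can overshoot when cluster sizes cluster awkwardly near $L^*$. Overcoming this requires using both procedures jointly: when \texttt{Median Recursion} with $L_{i+1}$ pays extra because $C(L_{i+1})$ overshoots $C(L^*)$, the excess mass must sit in clusters of sizes in $(L^*,L_{i+1}]$, which simultaneously drags $D(L_{i+1})$ below $D(L^*)$ and makes \texttt{Block Sorting} with $k_{i+1}$ recover the Block Sorting clairvoyant bound via Lemma~\ref{lemm:block_sorting}; the symmetric regime, in which $L_i\le L^*$ and the extra cost is absorbed by a favourable change in $D$, is handled analogously using Lemma~\ref{lemm:median_recursion}. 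Making this complementary pairing precise across all cluster-size distributions of $G$ is the technical heart of the proof.
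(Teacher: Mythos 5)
There is a genuine gap, and it sits exactly where you locate ``the technical heart'': the claim that snapping $L^*$ to a doubly-exponential grid on $L$ changes the clairvoyant logarithms by only a constant factor is false for the term $\log\frac{C(L)}{L}$. If $L_i \le L^* < L_{i+1}=L_i^2$, then $L_i > \sqrt{L^*}$ and the best you can say is $\log\frac{C(L_i)}{L_i} \le \log\frac{C(L^*)}{L^*} + \tfrac12\log L^*$; the \emph{additive} $\tfrac12\log L^*$ is an unbounded \emph{multiplicative} blowup whenever $\log\frac{C(L^*)}{L^*}$ is small while $L^*$ is polynomial in $n$. Concretely, take $n/2$ singletons and four clusters of size $n/8$, with $n$ chosen so that the largest grid point below $n/8$ is $(n/8)^{2/3}$: the clairvoyant choice $L^*=n/8$ gives $C(L^*)\log\frac{C(L^*)}{L^*}=O(n)$, but every grid point gives $\Omega(n\log n)$ (and $L_{i+1}$ violates $C(L)<n$). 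Your proposed rescue --- that an overshoot in $C$ forces a shrinkage in $D$ --- does not fire here, since $C(L)=n/2$ and $D(L)=4$ are constant over the entire relevant range of $L$; the loss comes purely from the denominator $L$ moving, not from any cluster-size threshold of $G$ crossing the grid. The paper avoids this by gridding a different quantity: it enumerates the \emph{exponent} $i\approx\log\frac{C}{L}$ over powers of two $1,2,4,\dots,\log n$ and guesses $C$ by sequential doubling within each branch (reusing the top of the recursion tree so the doubling telescopes), which preserves $\log\frac{C}{L}$ up to a factor $2$ by construction.

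The Block Sorting half has a parallel problem. \cref{lemm:block_sorting} requires $k=2D(L)$, and $D(L)$ has no numerical relation to $L$, so pairing $L_i$ with $k_i=2L_i$ is not justified by any lemma in the paper; your grid of $k$-values happens to coincide with the paper's ladder $k=2\cdot 2^{2^i}$, but the analysis must target $D(L_0)$, not $L_0$. Even then, the doubly-exponential grid only guarantees some $k$ with $2D(L_0)\le k\le 2D(L_0)^2$, and the resulting $+k\log k$ term is only affordable when $C(L_0)\ge D(L_0)^2$; the complementary case $C(L_0)<D(L_0)^2$ needs a separate device (the paper adds a doubling scheme that sorts $\ell=2,4,8,\dots$ arbitrary elements and invokes pigeonhole once $\ell>D(L_0)$), which your proposal has no analogue of. In short, the architecture (parallel simulation with $O(\log\log n)$ overhead) is right, but the choice of what to grid is wrong for both subroutines, and the complementarity argument you defer to is not carried out and, at least in the form stated, does not cover the failure mode above.
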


We prove the theorem via two claims, on \texttt{Block Sorting} and \texttt{Median Recursion}, respectively. 
 
\begin{claim} \label{clai:block_sorting_oblivious}
  There is an algorithm $\cA$ for element distinctness with
  \[
    \universal(\cA, G) \in O\Big(\log\log(n) \cdot \min_{\substack{L \ge 1\\ C(L) < \frac{n}{2}}} (C(L) + D(L)) \max\{1, \log D(L)\}\Big).
  \]
\end{claim}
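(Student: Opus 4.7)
The approach is to run $O(\log\log n)$ parallel instances of \texttt{Block Sorting}, each with a different parameter $k$, and halt as soon as any instance returns a witness. Time-multiplexing $t$ parallel instances on a single processor incurs at most a factor-$t$ slowdown, so it suffices to exhibit $O(\log\log n)$ values of $k$ such that, for every $G \in \bG_n$, one of the corresponding \texttt{Block Sorting} instances terminates within a constant factor of the clairvoyant Block-Sorting optimum from \cref{theo:clair_median}.

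I would use the doubly-exponential sequence $k_i := 2^{2^i}$ for $i = 0, 1, \ldots, \lceil \log\log n \rceil$, which gives $O(\log\log n)$ parameters covering $[2, n]$. Fix any $G$ and let $L^\star$ realize the clairvoyant minimum, with $d := D(L^\star)$ and $c := C(L^\star)$. Picking the smallest index $i$ with $k_i \ge 2d$, one has $\log k_i = 2^i \le 2\lceil \log(2d)\rceil = O(\log d)$ and $k_i \le O(d^2)$. Applying \cref{lemm:block_sorting} to this instance (whose hypotheses $k_i \ge 2d$ and $C(L^\star) < n/2$ are both met), its running time is $O((c + k_i)\log k_i) = O((c + d^2)\log d)$.

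The crux is controlling the ratio $((c + d^2)\log d)/((c + d)\log d)$ in every regime. When $c = \Omega(d^2)$ this ratio is $O(1)$, immediate. In the complementary regime $c = o(d^2)$, I would exploit the auxiliary estimate $D(1) \le d + c$ (each small cluster contributes at least one element to $c$): when the doubly-exponential sequence places $k_i$ above $D(1)$, a pigeonhole argument forces \texttt{Block Sorting}'s first block of $k_i$ elements to already contain a collision, so the instance terminates in $O(k_i\log k_i)$ steps, and one then verifies that this is within the allowed $O(\log\log n)$ factor of $(c + d)\log d$.

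The main obstacle will be the intermediate regime where $c \in (d, d^2)$ and the first-block pigeonhole argument does not immediately fire, so that the $O(d^2 \log d)$ upper bound from the above analysis is potentially a factor of $d$ larger than the clairvoyant optimum $\Theta(d \log d)$. Resolving this regime likely requires either a sharper instance-specific analysis of \texttt{Block Sorting} that leverages the specific choice of the doubly-exponential sequence, or augmenting the parallel set with a single auxiliary algorithm---for instance, a BST-insertion routine that runs in $O(D(1)\log D(1))$ comparisons and hence matches the optimum whenever $D(1)$ is within a constant factor of $d$, which can be arranged to cover precisely the otherwise-uncovered intermediate regime. Combining the full case analysis with the $O(\log\log n)$ overhead of parallel simulation then yields the claim.
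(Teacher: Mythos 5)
Your approach is essentially the paper's: run $O(\log\log n)$ \texttt{Block Sorting} instances with doubly-exponential block sizes in parallel, split on whether $C(L^\star)\ge D(L^\star)^2$, and cover the complementary regime with one extra routine that finds a duplicate in $O(m\log m)$ comparisons where $m$ is the number of distinct values. The ``obstacle'' you flag is real --- the pigeonhole on \texttt{Block Sorting}'s first block fails precisely because the doubly-exponential grid can overshoot $D(1)$ quadratically --- and the auxiliary algorithm you propose to fix it is exactly the paper's second parallel branch (the paper uses a doubling scheme that sorts $\ell=2,4,8,\dots$ arbitrary elements, which is interchangeable with your BST-insertion routine). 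One correction to your justification: the auxiliary routine does \emph{not} need $D(1)$ to be within a constant factor of $d=D(L^\star)$; what makes it work in the regime $c\le d^2$ is that $D(1)\le c+d$ (every cluster of size $<L^\star$ contributes at least one element to $c$) together with $\log D(1)\le\log(c+d)=O(\log d)$, so $O(D(1)\log D(1))=O((c+d)\log d)$ matches the clairvoyant bound even when $D(1)\gg d$. With that substitution your case analysis closes, and the claim follows.
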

\begin{proof}
  Let $I$ be the input. The algorithm $\cA$ runs the following $O(\log \log n)$ algorithms in parallel:
  \begin{itemize}
    \item for $i=0,\dots,\lceil\log\log n\rceil$, a \texttt{Block Sorting} instance with $k=2\cdot 2^{2^{i}}$; and
    \item  a doubling scheme that starts with $\ell=2$, checks $\min\{n,\ell\}$ arbitrary elements $I' \subseteq I$ for a duplicate via sorting $I'$ (this checking step may pick elements that were checked before), and then doubles $\ell$.
  \end{itemize}

Let $L_0=\arg\min_{L \ge 1}^{C(L)<n/2} \big(C(L)+D(L)\big)\max\{1, \log D(L)\}$. We show that some parallel branch halts within $O\big((C(L_0)+D(L_0))\max\{1, \log D(L_0)\}\big)$ comparisons.

If $C(L_0)\ge D(L_0)^2$, pick $i=\lceil\log\log D(L_0)\rceil$.
Then $k=2\cdot 2^{2^{i}}$ satisfies $2D(L_0)\le k\le 2D(L_0)^2\le 2C(L_0)$,
and Lemma~\ref{lemm:block_sorting} yields a duplicate in $O\big((C(L_0)+k)\log k\big)\subseteq O\big(C(L_0)\max\{1, \log D(L_0)\}\big)$ comparisons.
Otherwise $C(L_0)\le D(L_0)^2$. When the doubling scheme reaches $\ell=2^{\lceil\log (D(L_0)+1)\rceil}$, then by the pigeonhole principle, any subset $I' \subset I$ of size $\ell$ must contain a duplicate. This detects a duplicate in $O(\ell\log \ell) \subseteq O\big(D(L_0)\max\{1, \log D(L_0)\}\big)$ comparisons.
\end{proof}

\begin{claim}\label{clai:median_recursion_oblivious}
There exists an algorithm $\cA$ with
\[
  \universal(\cA,G)\in
  O\Big(\log\log n\cdot
    \min_{\substack{L\ge 2\\ C(L)<n}}
      \big(n+C(L)\log\tfrac{C(L)}{L}\big)
  \Big).
\]
\end{claim}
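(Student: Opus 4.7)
The plan is to mirror the proof of Claim~\ref{clai:block_sorting_oblivious}. We run $O(\log\log n)$ \texttt{Median Recursion} instances in parallel with parameters $L_i = 2^{2^i}$ for $i=0,1,\ldots,\lceil\log\log n\rceil$, alongside the algorithm from Claim~\ref{clai:block_sorting_oblivious} as one additional parallel thread. The combined algorithm halts as soon as any thread finds a duplicate; since there are $O(\log \log n)$ threads, a sequential simulation multiplies the finish time of the fastest thread by at most $O(\log \log n)$.

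For the analysis, fix $G$ and let $L_0 \ge 2$ with $C(L_0)<n$ minimize $f(L) := n + C(L)\max\{1,\log(C(L)/L)\}$. We take $L_i$ to be the largest doubly-exponential value with $L_i \le L_0$; then $L_i \le L_0 \le L_i^2$, so $L_i \ge \sqrt{L_0}$. By monotonicity of $C$ we have $C(L_i) \le C(L_0) < n$, and Lemma~\ref{lemm:median_recursion} bounds the running time of the $L_i$-instance by $O(n + C(L_i)\max\{1,\log(C(L_i)/L_i)\})$. The argument then splits into two regimes. In the \emph{deep} regime, where $\log(C(L_0)/L_0) \ge \tfrac{1}{2}\log L_0$, the substitution $L_i \ge \sqrt{L_0}$ only doubles the logarithm: $\log(C(L_i)/L_i) \le \log(C(L_0)/\sqrt{L_0}) = O(\log(C(L_0)/L_0))$, so $f(L_i) = O(f(L_0))$ and the $L_i$-instance alone suffices. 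In the \emph{shallow} regime, where $C(L_0) < L_0^{3/2}$, the clusters of size at least $L_0$ together contain at least $n - L_0^{3/2}$ elements, so $D(L_0) \le (n-C(L_0))/L_0 \le n/L_0$. A short calculation then bounds the \texttt{Block Sorting} cost $g(L_0) := (C(L_0)+D(L_0))\max\{1,\log D(L_0)\}$ by $O(f(L_0))$, and the parallel Claim~\ref{clai:block_sorting_oblivious} thread finishes in $O(\log\log n \cdot g(L_0)) = O(\log\log n \cdot f(L_0))$ comparisons.

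The hard part will be proving the shallow-regime inequality $g(L_0) = O(f(L_0))$ uniformly across boundary cases. When $L_0$ is near $n$, there are necessarily very few clusters of size $\ge L_0$, so $\log D(L_0) = O(1)$ and the $D(L_0)\log D(L_0)$ term collapses; when $L_0$ is small, the doubly-exponential grid densely covers $[2, L_0]$ and the deep-regime bound applies directly. The edge cases where \texttt{Block Sorting}'s precondition $C(L) < n/2$ fails at $L = L_0$ are absorbed by the doubling scheme already present in Claim~\ref{clai:block_sorting_oblivious}. Once these boundary cases are in place, the two-regime argument yields the claimed $O(\log\log n \cdot \min_L f(L))$ bound.
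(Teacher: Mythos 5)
There is a genuine gap in your shallow regime, and it cannot be patched with the tools you propose. Your grid is doubly exponential in $L$ itself, so the best grid point $L_i \le L_0$ only guarantees $L_i \ge \sqrt{L_0}$, which inflates $\log(C(L_i)/L_i)$ by an additive $\tfrac12\log L_0$. This is harmless precisely when $\log(C(L_0)/L_0) = \Omega(\log L_0)$ (your deep regime), but when $C(L_0)/L_0 = O(1)$ and $L_0$ is polynomial in $n$ the inflation is a multiplicative $\Theta(\log n)$ on the $C(L_0)$ term. Concretely, take $G$ with $0.9n$ singletons and two clusters of size $0.05n$, where $0.05n$ sits just below a gap of your grid, so the largest grid point below $L_0 = 0.05n$ is about $\sqrt{0.05n}$. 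Here $f(L_0) = n + 0.9n\log 18 = \Theta(n)$, but every \texttt{Median Recursion} thread costs $\Omega(n\log n)$: grid points $L \le 0.05n$ all have $C(L) = 0.9n$ and pay $0.9n\log\bigl(0.9n/\sqrt{0.05n}\bigr) = \Theta(n\log n)$, while grid points $L > 0.05n$ have $C(L) = n$, so Lemma~\ref{lemm:median_recursion} does not apply. Your \texttt{Block Sorting} fallback is also unavailable: $C(L) = 0.9n \ge n/2$ for every $L \ge 2$, so the minimum in Claim~\ref{clai:block_sorting_oblivious} ranges over $L=1$ only, where $D(1) \approx 0.9n$ and the bound is again $\Theta(n\log n)$; the pigeonhole/doubling scheme needs $\ell > 0.9n$ samples for the same reason. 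So your combined algorithm is $\Omega(n\log n)$ on an instance where the claim demands $O(n\log\log n)$.

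The fix --- and this is what the paper does --- is to grid over the \emph{ratio} rather than over $L$: since $\log(C/L)$ is the quantity that appears multiplicatively, the paper tries $O(\log\log n)$ values $i$ with $C/L = 2^i$, where $i$ ranges over powers of two up to $\log n$ (so $\log(C/L)$ is determined up to a factor $2$), and separately guesses $C$ by doubling. This requires a stopping rule inside \texttt{Median Recursion} (run with $L = \max(2, C/2^i)$ until at least $C$ elements lie in small calls) so that a too-small guess of $C$ can be aborted, plus reusing the top of the recursion tree across successive doublings of $C$ so that the cost per value of $i$ telescopes to $O(n + C(L_0)\log(C(L_0)/L_0))$. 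Your parallel-threads-plus-sequential-simulation framing matches the paper's, but without the ratio grid and the abort condition the inequality you need in the shallow regime is simply false.
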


\begin{proof}
  Let $L_0 = \arg \min_{L \ge 2}^{C(L) < n} (n + C(L) \log \frac{C(L)}{L})$.
  If we guess a $C \ge C(L_0)$ and an $L \le L_0$,
  then running median recursion with this $L$ until there are $C$ elements among recursive calls with $|I'| \le L$
  will successfully find a duplicate.
  If we have $C \le 2 C(L_0)$ and $L \ge \frac{L_0^2}{C(L_0)}$,
  then $O(n + C \log \frac{C}{L}) = O(n + C(L_0) \log \frac{C(L_0)}{L_0})$.
  To guess such a $C, L$, algorithm $\cA$ will try $O(\log \log n)$ different values of $\log \frac{C(L)}{L}$ in parallel
  and use a doubling strategy to guess $C$, see Algorithm~\ref{algo:oblivious_median}.

  We now analyze Algorithm~\ref{algo:oblivious_median}.
  Let $C = 2^{\lceil \log(C(L_0)) \rceil}$ and $i = 2^{\lceil\log\log \frac{2 C(L_0)}{L_0}\rceil}$.
  Put $L = \max(2, C / 2^i)$.
  Then, $C \in [C(L_0), 2 C(L_0)]$ and $L \le L_0$,
  but also $\log \frac{C}{L} \in O(\log \frac{C(L_0)}{L_0})$.
  Therefore, the $(i, C)$-iteration of Algorithm~\ref{algo:oblivious_median} will find a duplicate,
  in $O(C \log \frac{C}{L}) = O(C(L_0) \log \frac{C(L_0)}{L_0})$ comparisons.
  (We do not need $O(n)$ time since we remember the top part of the recursion tree.)
  The number of comparisons of previous $(i, *)$ iterations forms a geometric series,
  which sums to $O(n + C \log \frac{C}{L}) = O(n + C(L_0) \log \frac{C(L_0)}{L_0})$.
  Since Algorithm~\ref{algo:oblivious_median} tries $O(\log \log n)$ different values of $i$ in parallel, this shows the claim.
\end{proof}
\cref{theo:algo_oblivious} now follows by sequentially simulating the parallel execution of \Cref{clai:block_sorting_oblivious} and \ref{clai:median_recursion_oblivious}. 

\begin{algorithm}[H]
  \begin{algorithmic}
    \Require Input $I$
    \For {$i = 1, 2, 4, 8, \dots, \log(n)$} in parallel
      \For {$C = 1, 2, 4, 8, \dots$}
        \State Run median recursion with $L \gets \max(2, C / 2^i)$,
        \State until there are $\ge C$ elements among recursive calls with $|I| < L$.
        \State Remember the top $\lfloor \log \frac{n}{C}\rfloor$ levels of the recursion tree and reuse them for the next iteration.
      \EndFor
    \EndFor
  \end{algorithmic}
  \caption{Oblivious Median Recursion} \label{algo:oblivious_median}
\end{algorithm}

\newpage
\section{Lower Bounds for Element distinctness} \label{sect:no_competitive}

We first show that, with respect to universal optimality,
no algorithm can be $o(\log \log n)$-competitive.

\subsection{No element distinctness algorithm can be \texorpdfstring{$\pmb{o(\log \log n)}$}{o(log log n)}-competitive}
\label{sec:no_loglogn}

Let $\cA$ be an arbitrary algorithm for element distinctness. 
We show a lower bound that uses an adaptive adversary to prove that $\cA$ cannot be universally $o(\log \log n)$-competitive for every graph $G$. 
Formally, let $n \in \mathbb{N}$ be arbitrary.
  We consider the following game between the algorithm and an (adaptive) adversary on a list $X$ of $n$ labeled elements.
  In each round, the algorithm compares two elements,
  and the adversary has to answer with ``<'', ``='' or ``>''.
  The goal of the adversary is to survive for $\frac{1}{8} n \log \log n$ rounds without ever answering ``=''.
  Afterwards, the adversary will construct the input $I$ from $X$ by picking an ordered set $U$
  and assigning each element of $X$ a value in $U$, so that the pairwise comparisons on $I$
  are consistent with the answers it has given. It then sets $G = G(I)$.
  This ensures that $\universal(\cA, G_j) \ge n \log \log n$.
  Finally, the adversary has to pick an algorithm $\cA'$ with $\universal(\cA', G) \in O(n)$. 
  
  \textbf{The strategy.}
  The adversary uses the following strategic game to answer the comparisons:
  Imagine a complete binary tree $T$ of height $n \log n$, where each elements $x \in X$ is stored at some node $u(x)$ in the tree.
  Initially, all elements are placed at the root of $T$. 
  In the end, the adversary will pick $U$ to be the set of leaves of $T$, ordered from left to right.
  When the algorithm compares two elements $x, y \in X$, the adversary answers as follows:
  \begin{itemize}[noitemsep] \setlength{\itemsep}{3pt}
    \item If there exists a node $u \in T$ such that $u(x)$ lies in the left subtree of $u$ and $u(y)$ lies in the right subtree of $u$, answer ``$x < y$''.
    \item If there exists a node $u \in T$ such that $u(x)$ lies in the right subtree of $u$ and $u(y)$ lies in the left subtree of $u$, answer ``$y < x$''.
    \item If $u(x) = u(y) = u$, move $u(x)$ to the left child of $u$ and $u(y)$ to the right child. Answer ``$x < y$''.
\item If $u(x)$ and $u(y)$ lie on the same root-to-leaf path and $u(y)$ is in the right subtree of $u(x)$, put $u = u(x)$, move $u(x)$ to the left child of $u$, and answer ``$x < y$''. If $u(y)$ is in the left subtree of $u(x)$ instead, put $u = u(x)$, move $u(x)$ to the right child of $u$, and answer ``$y > x$''.
\item The case where $u(x), u(y)$ lie on the same root-to-leaf path and $u(y)$ is above $u(x)$ is symmetric.
  \end{itemize}
  Note that in each case, the adversary moves at most two elements down the tree and these elements move one step. 
  After $\frac{1}{8} n \log \log n$ rounds of this game, not too many elements can be deep in the tree:

 \begin{claim} \label{clai:few_deep}
    After $\frac{1}{8} n \log \log n$ rounds have passed,
    there exists an integer $i \in \{\lfloor\frac{1}{2} \log \log n\rfloor, \dots, \lfloor \log \log n \rfloor\}$
    such that there are strictly fewer than $n / 2^i$ elements of depth at least $2^i$.
  \end{claim}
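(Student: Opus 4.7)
The plan is to argue by contradiction, using the total depth
$S := \sum_{x \in X} \mathrm{depth}(u(x))$
as a potential function. Each round moves at most two elements one step further down the tree, so $S$ grows by at most $2$ per round; after $R = \tfrac{1}{8} n \log \log n$ rounds we therefore have $S \le 2R = \tfrac{1}{4} n \log \log n$. The aim is to show that if every threshold in the claimed range failed the conclusion, i.e.\ if $N_{2^i} \ge n/2^i$ for every $i \in \{a, \dots, b\}$ with $a := \lfloor \tfrac{1}{2} \log \log n \rfloor$ and $b := \lfloor \log \log n \rfloor$, then $S$ would be forced to strictly exceed $\tfrac{1}{4} n \log \log n$, a contradiction.

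First, I would rewrite the potential via the layer-cake identity
$S = \sum_{d \ge 1} N_d$
and restrict attention to $d \in [1, 2^b]$. Next, I would split this range dyadically into the initial block $[1, 2^a - 1]$ and the blocks $[2^j, 2^{j+1}-1]$ for $j \in \{a, \dots, b-1\}$, and use monotonicity of $N$ in $d$ to bound each block below by its right endpoint. Each dyadic block then contributes at least $2^j \cdot N_{2^{j+1}} \ge 2^j \cdot n/2^{j+1} = n/2$, and the initial block contributes at least $(2^a - 1)\, N_{2^a} \ge n/2$ for $a \ge 1$. Summing yields $S \ge (b - a + 1)\cdot n/2$.

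Finally, I would compare $(b - a + 1)\cdot n/2$ to the round budget $\tfrac{1}{4} n \log \log n$. A short calculation with floors shows that $\lfloor \log \log n \rfloor - \lfloor \tfrac{1}{2} \log \log n \rfloor + 1$ is strictly greater than $\tfrac{1}{2} \log \log n$ (the two floors can never round down by the ``same amount''), which produces the desired strict inequality and hence the contradiction.

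The main obstacle is constant-tightness: both bounds on $S$ a priori evaluate to $\tfrac{1}{4} n \log \log n$, so the proof hinges on extracting a strictly positive additive slack. There are two natural sources of slack — the floor discrepancy between $a$ and $b$, and the contribution $N_{2^b} \ge n/2^b > 0$ of the rightmost block that would otherwise be discarded — and the delicate step is to keep enough of the dyadic decomposition in play to expose one of them while not over-complicating the accounting.
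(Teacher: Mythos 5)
Your proposal is correct and takes essentially the same route as the paper: both argue by contradiction, bound the total depth by twice the number of rounds, and show via a dyadic (layer-cake / Abel-summation) decomposition that if every threshold fails then the total depth is at least $\bigl(\lfloor \log\log n\rfloor - \lfloor \tfrac12\log\log n\rfloor + 1\bigr)\tfrac{n}{2} > \tfrac{n}{4}\log\log n$. The strict slack you worry about in your last paragraph is exactly the floor inequality $\lfloor x\rfloor - \lfloor x/2\rfloor + 1 > x/2$, which is also what the paper uses, so there is no gap.
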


  \begin{proof}
  We show the claim via an indirect proof: If there is no such $i$,
  then the total depth of all elements is at least
  \[
    \sum_{i=\lfloor\frac{1}{2} \log \log n\rfloor}^{\lfloor\log \log n\rfloor} 2^i \cdot (\frac{n}{2^{i}} - \frac{n}{2^{i+1}}) = \sum_{i=\lfloor \frac{1}{2} \log \log n\rfloor}^{\lfloor\log \log n\rfloor} \frac{n}{2} = \frac{n}{2} \Big(\lfloor\log \log n\rfloor - \big\lfloor\frac{1}{2} \log \log n\big\rfloor +1\Big) > \frac{n}{4} \log \log n,
  \]
  so more than $\frac{1}{8} n \log \log n$ rounds have passed. This shows the claim.
\end{proof}

The core idea is as follows: the adversary plays the game without ever revealing a duplicate. After $\tfrac{1}{8} n \log \log n$ rounds, it collects all comparisons made so far.
It then chooses an integer $L$ and defines a graph $G$ on $X$ that contains many clusters of size $L$, formed by grouping elements that share a root-to-leaf path.
It then moves all elements in a cluster down to the leaf of this path.
Let $U$ be the set of leaves of $T$, ordered from left to right.
If $X = (x_1, \dots, x_n)$, put $I = (u(x_1), \dots, u(x_n))$, then $G(I) \cong G$.
Since elements are only moved downward, the pairwise comparisons on $I$ remain consistent with the adversary’s answers.
Hence, we obtain an input $I$ of $n$ elements from $U$, on which, after $\tfrac{1}{8} n \log \log n$ comparisons, algorithm $\cA$ has not yet found a duplicate, although one exists.
Finally, we show that a clairvoyant algorithm, knowing $G$, can solve all inputs $I'$ with $G(I') \cong G$ in linear time.
Therefore, for this graph $G$, there exists an algorithm with linear universal running time, whereas $\cA$ has universal running time $\Omega(n \log \log n)$.

\begin{theorem} \label{theo:lower_competitive}
    For every algorithm $\cA$ and every positive integer $n$,
    there exists an $n$-vertex graph $G$ and an algorithm $\cA'$ such that
    $\universal(\cA, G) \ge \tfrac{1}{8} n \log\log n \text{, but}
    \universal(\cA', G) \in O(n).$
\end{theorem}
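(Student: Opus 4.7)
The plan is to run the adversary's tree game against $\cA$ for exactly $\tfrac{1}{8}n\log\log n$ rounds and then harvest the final state of the tree $T$ to manufacture both the graph $G$ and the competing algorithm $\cA'$. I would first execute the strategy against $\cA$ and note that, because the adversary never replies ``$=$'', the algorithm has reported no duplicate after these comparisons. Inspection of the five cases of the strategy shows that the transcript of ``$<$''/``$>$'' answers is consistent with \emph{any} interpretation in which the elements are placed at leaves of $T$ in left-to-right order, because in every case the answer is witnessed by a common ancestor at which the two involved elements have been sent to distinct subtrees, and this witness remains valid when either element is later pushed further down.

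Next I would apply Claim~\ref{clai:few_deep} to obtain an index $i^{\star}\in\{\lfloor\tfrac{1}{2}\log\log n\rfloor,\dots,\lfloor\log\log n\rfloor\}$ with strictly fewer than $n/2^{i^{\star}}$ elements of depth $\ge 2^{i^{\star}}$, and set $L:=2^{i^{\star}}\in[\sqrt{\log n},\log n]$. Thus at least $n-n/L$ elements (the \emph{shallow} ones) currently reside at depth $<L$ inside the top-$L$ subtree of $T$, and fewer than $n/L$ elements are \emph{deep}. I would then construct the input $I$ by routing each element from its current node $u(x)$ down to a carefully chosen descendant leaf of $T$: because only moving downward is used, every recorded comparison answer is preserved. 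I would use this freedom to aggregate shallow elements into large clusters by picking a leaf $\ell$ and filling a cluster at $\ell$ with shallow elements whose current nodes are all ancestors of $\ell$. I choose these anchor leaves so that the shallow elements produce $\Theta(n/\log n)$ clusters of size $\ge\log n$, and I place the $<n/L$ deep elements into reachable descendant leaves, absorbing them into existing large clusters whenever possible and leaving any remainder as singletons. Setting $U$ to be the leaves of $T$ ordered left to right and $I=(u(x_1),\dots,u(x_n))$ yields $G:=G(I)$; the run of $\cA$ on $I$ is identical to its run in the game, so $\universal(\cA,G)\ge \tfrac{1}{8}n\log\log n$.

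Finally I would take $\cA'$ to be the clairvoyant algorithm of Theorem~\ref{theo:clair_median} specialised to $G$. By the above construction, $D(\log n)\in O(n/\log n)$ and $C(\log n)\in O(n/L)$, so the block-sorting bound $(C(\log n)+D(\log n))\max\{1,\log D(\log n)\}$ evaluates to $O(n)$, giving $\universal(\cA',G)\in O(n)$ and completing the proof together with the lower bound on $\cA$.

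The main obstacle is the cluster-construction step. One must argue that in \emph{every} tree state reachable after $\tfrac{1}{8}n\log\log n$ rounds, the shallow elements can actually be partitioned into $\Theta(n/\log n)$ path-compatible clusters of size $\ge\log n$, and that the deep elements do not force more than $O(n/\log n)$ singletons. The leverage is precisely Claim~\ref{clai:few_deep} in combination with the total-movement budget $\tfrac{1}{4}n\log\log n$: this forces the shallow elements to be ``flat enough'' in $T_L$ that long common-ancestor paths exist and simultaneously limits the number and spread of deep branches that need to be absorbed. The delicate case is $i^{\star}=\lfloor\tfrac12\log\log n\rfloor$, where $L=\sqrt{\log n}$ leaves the least slack between the $\Omega(n\log\log n)$ lower bound on $\cA$ and the $O(n)$ target for $\cA'$; here the argument must exploit the full depth profile of the claim, not merely the count of deep elements.
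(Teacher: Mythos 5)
Your overall skeleton matches the paper's: play the adversarial game for $\tfrac18 n\log\log n$ rounds, invoke Claim~\ref{clai:few_deep} to find a good depth scale $i^\star$, push elements down to leaves to realize a cluster graph consistent with the transcript, and let $\cA'$ be the clairvoyant algorithm for the resulting $G$. The consistency argument (answers are witnessed by a common ancestor and survive further downward moves) is also fine. However, there is a genuine gap in the quantitative heart of the proof, namely the choice of cluster size and the resulting bound for $\cA'$.

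You set the cluster size to $L = 2^{i^\star}\in[\sqrt{\log n},\log n]$ and aim for $\Theta(n/\log n)$ clusters of size about $\log n$, with the leftover mass bounded by the number of deep elements, i.e.\ $C \in O(n/2^{i^\star})$. This does not give $\universal(\cA',G)\in O(n)$. In the block-sorting bound $(C+D)\max\{1,\log D\}$ with $D=\Theta(n/\log n)$ you have $\log D=\Theta(\log n)$, so you need $C\in O(n/\log n)$; but your construction only guarantees $C\in O(n/2^{i^\star})$, and when $i^\star=\lfloor\tfrac12\log\log n\rfloor$ this is $O(n/\sqrt{\log n})$, giving $C\log D=\Theta(n\sqrt{\log n})$. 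The median-recursion bound $n+C\log(C/L)$ fails for the same reason. You flag this as ``the delicate case'' but do not resolve it, and it is not resolvable with clusters of size $\Theta(\log n)$: there is no way to force all but $O(n/\log n)$ elements into path-compatible groups in every reachable tree state. The paper's proof sidesteps this by choosing clusters of a completely different (much larger) size, $L=n/2^{2^{i-1}}\ge\sqrt{n}$. Then the elements of depth at most $2^{i-2}$ lie on only $2^{2^{i-2}}$ distinct root paths, so the ungrouped leftover from incomplete paths is at most $L\cdot 2^{2^{i-2}}=n/2^{2^{i-2}}$, and adding the at most $n/2^{i-2}$ deep elements gives $C(L)\le n/2^{i-3}$. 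The doubly-exponential scale is exactly what makes $C(L)\log\frac{C(L)}{L}\le\frac{n}{2^{i-3}}\cdot 2^{i-1}=4n$, so Median Recursion with this $L$ runs in $O(n)$. To repair your argument you would need to replace the $L=2^{i^\star}$ choice by this (or an equivalent) doubly-exponential parameterization and redo the leftover count accordingly.
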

\begin{proof}
    Let $X$ be a list of $n$ labeled elements. 
    Run $\cA$ on $X$ and play the adversarial game that we described in the beginning of the section. 
    After $\frac{1}{8} n \log \log n$ rounds, let $i$ be as in Claim~\ref{clai:few_deep} and set $L = \frac{n}{2^{2^{i-1}}} > 1$.
    
    To defined the input $I$, let $U$ the set of leaves of $T$, ordered from left to right.
    To turn $X$ into $I$, we move each element of $X$ into a leaf of $T$.
    More precisely:
    Observe that, after $\frac{1}{8} n \log \log n$ comparisons, there can be no element in $X$ where $u(x)$ is in a leaf of $T$. 
    While there is an unoccupied leaf in $T$ with root-to-leaf path $\pi$ for which at least $L$ elements $x \in X$ have $u(x)$ in $\pi$, pick an arbitrary subset $X'$ of $L$ of these elements. 
    Intuitively, we form $X'$ into a cluster of size $L$. 
    Formally, move all elements in $X'$ down along $\pi$ to the unoccupied leaf. 
    Repeat this procedure until no such path exists anymore. 
    Then, one-by-one, move all remaining elements down to an unoccupied leaf of $T$.
    Put $I = (u(x_1), \dots, u(x_n))$ where $X = (x_1, \dots, x_n)$ then $I$ is a list of elements from $U$ and all answers given by the strategic game are consistent with $U$. 
    Since the  game never answers equality, the algorithm $\cA$ did not find a duplicate on $I$ after $\frac{1}{8} n \log \log n$ comparisons.  

    Let $G := G(I)$. 
    Since all elements in $G$ either lie in a cluster of size $L$, or in a singleton cluster, it follows that $C(L)$ equals the number of singleton elements. 
    We use the original state of the tree $T$ to bound this quantity.
    By \cref{clai:few_deep}, our tree $T$ had fewer than $\frac{n}{2^{i-2}}$ elements of depth at least $2^{i-2}$.
    On the other hand, the elements of depth $\le 2^{i-2}$ lie at most $\le 2^{2^{i-2}}$ distinct paths from the root. 
    This counting argument shows that there are at most $L \cdot 2^{2^{i-2}}$ such elements that were not packed into a cluster, hence 
    \[
    C(L) \le \frac{n}{2^{i-2}} + L \cdot 2^{2^{i-2}} \le \frac{n}{2^{i-2}} + \frac{n 2^{2^{i-2}}}{2^{2^{i-1}}} = \frac{n}{2^{i-2}} + \frac{n}{2^{2^{i-2}}} \le \frac{n}{2^{i-3}},
    \]
    First, this implies $C(L) < n$ since $i \ge \lfloor \frac{1}{2} \log \log n \rfloor$.
    Second, this implies
    \[
    C(L) \log \frac{C(L)}{L} \le C(L) \log \frac{n}{L} \le \frac{n}{2^{i-3}} \log 2^{2^{i-1}} = 4 n.
    \]
    We choose $\cA'$ to be the \texttt{Median Recursion} algorithm with parameter $L$. 
    By \cref{lemm:median_recursion}, for any input $I'$ such that $G(I') \cong G$ the running time of this algorithm is $O(n)$.
\end{proof}

\subsection{A \texorpdfstring{$\pmb{\log \log n}$}{log log n}-competitive algorithm}

We complement \cref{theo:lower_competitive}
by showing that the algorithm in \cref{theo:algo_oblivious} is $O(\log \log n)$-competitive.
In particular, we prove a lower bound using an adaptive adversary, showing that for all graphs $G$, no algorithm $\cA$ can achieve a better running time than the corresponding clairvoyant algorithm from \cref{theo:clair_median}.

Fix some $G \in \bG$ that contains at least one cluster of size at least two. Then, for every choice of $L$, this induces some $C(L)$ and $D(L)$. 
Consider any algorithm $\cA$.
The adversary employs the strategic game from Section~\ref{sec:no_loglogn} on a list $X$ of $n$ labeled elements.
We compute some $M$ depending on the values of $C(L)$ and $D(L)$ for every $L$. 
The adversary then plays the game for $M$ rounds without revealing a duplicate. 
After $M$ rounds, all elements of $X$ are in some node of a tree $T$. The adversary then moves all elements from $X$ to the leaves of $T$ in a specific manner. The input $I[i]$ is the leaf that contains the $i$'th element from $X$ and we prove that $G(I) \cong G$.
The pairwise comparisons on $I$ are automatically consistent the adversary's answers. 
Thus, for this input $I$, the algorithm $\cA$ fails to find a duplicate after $M$ comparisons,
even though $I$ contains at least one pair of equal elements.

\subsubsection{A lower bound complementing \cref{lemm:median_recursion}}
We first show a lower bound that complements \cref{lemm:median_recursion},
but that is missing the $O(n)$-term.
This already shows that our algorithm from Theorem~\ref{theo:algo_oblivious} is $O(log \log n)$-competitive for inputs that require at least a linear number of comparisons.

  \begin{claim}
    \label{clai:isomorphic}
    Fix some $G \in \bG$ and define $M := \min_{L\ge 2}^{C(L) < n} \left( \frac{1}{4} C(L) \cdot \max \{0, \log_2 \frac{C(L)}{2L} \} \right)$.
    Play the adversarial game on a list $X$ of size $n$.
    Let at most $M$ rounds have passed and let be $T$ the current state of the tree.
    Then, there exists a graph $G'$ on $X$ that is isomorphic to $G$
    such that for each cluster in $G'$, all elements share a root-to-leaf path in $T$. 
  \end{claim}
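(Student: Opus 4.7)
The plan is to exploit the invariant of the adversarial game: each comparison moves at most two elements down $T$ by exactly one level, hence after $M$ rounds $\sum_{x \in X}\operatorname{depth}(u(x)) \le 2M$. I would pick $L$ achieving the minimum in the definition of $M$ and set $h := \lceil \log_2 \tfrac{C(L)}{2L}\rceil$, handling the degenerate regime $M = 0$ (all elements at the root) as a trivial special case and assuming $h \ge 1$. The depth-sum bound then yields at most $2M/h \le C(L)/2$ elements of depth $\ge h$, so at least $n - C(L)/2$ elements lie in the ``shallow'' region at depth $<h$. A structural consequence is that every shallow element at depth $d$ is an ancestor of $2^{h-d} \ge 2$ of the $2^h$ depth-$h$ subtrees of $T$, giving significant routing flexibility.

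Next I would construct $G'$ by a greedy chain-packing against the cluster sizes of $G$. Processing the clusters in decreasing order of size, I assign to each cluster of size $k_i$ a fresh leaf $\ell_i$ of $T$ and $k_i$ currently unassigned elements whose nodes $u(x)$ lie on the root-to-$\ell_i$ path; these can later be slid down to the common leaf $\ell_i$, realizing a valid cluster of $G(I)$. Different clusters receive different leaves (to ensure distinct values), but can share depth-$h$ subtrees. The total capacity to absorb clusters is captured by a double-counting identity
\[
\sum_{v \text{ at depth } h}\#\{x : u(x) \text{ is a proper ancestor of } v\} \;\ge\; 2(n - C(L)/2),
\]
which far exceeds the total demand $n$. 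However, each shallow element is counted $2^{h - \operatorname{depth}(u(x))}$ times in the left-hand side, so a careful routing argument is needed to realize the full capacity without double-assigning an element.

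The main obstacle is making this routing rigorous, since different depth-$h$ subtrees share the same shallow ancestors yet each element can only be assigned to a single cluster. I would formalize feasibility as a bipartite matching problem between clusters and depth-$h$ subtrees, where the capacity of each subtree is its shallow-ancestor count plus its own deep-element count, and verify that the Hall-type condition holds via the depth bound from the first step combined with $D(L)\cdot L \le n$ (which bounds the total size of large clusters). The calibration $h = \lceil \log_2 \tfrac{C(L)}{2L}\rceil$ is chosen precisely so that the Hall condition holds with slack: the factor $2L$ in the denominator ensures each subtree averages enough shallow capacity to host a cluster of size $\ge L$, while the factor $C(L)$ in the numerator ensures enough subtrees to absorb the small clusters. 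Edge cases---for instance when $h$ is at its boundary or $L$ is close to $n$---are dispatched by direct inspection.
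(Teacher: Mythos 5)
Your opening is on target: the depth-sum invariant $\sum_x \mathrm{depth}(u(x)) \le 2M$, the observation that few elements can be deep, and the greedy processing of clusters in decreasing size order are exactly the ingredients of the paper's proof. But the execution diverges at the critical step, and the part you defer (``a careful routing argument is needed'', ``verify that the Hall-type condition holds'') is precisely where the proof lives. Two concrete problems. First, you fix a single threshold $h = \lceil \log_2 \tfrac{C(L)}{2L}\rceil$ for the \emph{argmin} $L$ only. That gives a pigeonhole guarantee of only about $L$ elements per depth-$h$ prefix on average, which says nothing about whether a cluster of size $K \gg L$ can be placed on a single root-to-leaf path; your proposed Hall condition for the large clusters is exactly the unverified (and, with only the argmin guarantee, doubtful) part. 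Second, your capacity accounting credits each depth-$h$ subtree with ``its own deep-element count'', but deep elements inside the same depth-$h$ subtree need not share a root-to-leaf path (they can sit in disjoint branches below depth $h$), so they cannot be pooled into one cluster.

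The missing idea is to use the full strength of the definition of $M$ as a minimum over \emph{all} valid $L$, not just the minimizer. When the greedy reaches a cluster of size $L \ge 2$, apply the bound $M \le \tfrac14 C(L)\log_2\tfrac{C(L)}{2L}$ for \emph{that} $L$ (valid since a cluster of size $\ge L$ exists, hence $C(L) < n$). The still-unassigned elements include all elements of clusters of size $< L$, i.e.\ at least $C(L)$ of them; the depth-sum bound shows at most $\tfrac12 C(L)$ elements of $X$ have depth exceeding $\log_2\tfrac{C(L)}{2L}$, so at least $\tfrac12 C(L)$ unassigned elements are shallow for this $L$-specific threshold. These lie on at most $\tfrac{C(L)}{2L}$ depth-$\lfloor\log_2\tfrac{C(L)}{2L}\rfloor$ prefixes, so one prefix carries at least $L$ of them---a direct pigeonhole, no matching or Hall condition required. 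By varying the threshold with the cluster size being processed, each cluster gets its own tailored guarantee, which is what makes the single-pass greedy go through. (The degenerate case $M \le 0$, all elements at the root, you handle correctly.)
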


\begin{proof}
    Recall that any graph $G \in \bG$ consists of only edges that are in a cluster (clique). 
      We consider the clusters in $G$ in decreasing order of size. Let the current cluster that we consider have size $L$. 
      We find some root-to-leaf path that contains at least $L$ unused elements, use these elements to form a cluster of size $L$, and remove them from $T$. 
      This procedure, if successful, creates a graph $G'$ on $X$ that is isomorphic to $G$. 

    What remains is to prove that, for all $L$ that we consider, there exists such a root-to-leaf path. 
    If there exists an integer $L \geq 2$ with $C(L) < n$, such that $\log_2 \frac{C(L)}{2L} \le 0$, then $M \leq 0$. 
    However, this means we have played zero rounds, so all elements of $X$ are at the root of $T$ and we can create any graph on $X$, which shows the claim. 
    Therefore, suppose that for all $L \geq 2$, we have $\log \frac{C(L)}{2L} > 0$.
    When considering a cluster of size $L$, there are per definition at least $C(L)$ unused elements in the tree.
    Also, since at most $\frac{1}{4} C(L)\log \frac{n}{2L}$ rounds have passed,
    there are at most $\frac{1}{2} C(L)$ elements of depth $\ge \log \frac{n}{2L}$.
    Hence, there are at least $\frac{1}{2} C(L)$ unused elements of depth $\le \lfloor \log \frac{C(L)}{2L} \rfloor$.
    These elements lie on $2^{\lfloor \log \frac{C(L)}{2L} \rfloor} \le \frac{C(L)}{2L}$
    different root-to-leaf paths.
    Thus, one of these paths contains at least $\frac{C(L)}{2} / \frac{C(L)}{2L} =  L$ elements.
\end{proof}

\noindent
Claim~\ref{clai:isomorphic} immediately implies the following lower bound. 

\begin{proposition} \label{theo:lower_median_recursion}
  Let $G \in \bG$ and define $M := \min_{L\ge 2}^{C(L) < n} \left( \frac{1}{4} C(L) \cdot \max \{0, \log_2 \frac{C(L)}{2L} \} \right)$.
  Then every algorithm $\cA$ satisfies: 
  \[
  \universal(\cA, G) \ge M.
  \]
\end{proposition}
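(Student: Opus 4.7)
The plan is to derive the proposition directly from Claim~\ref{clai:isomorphic} together with the adversarial game of Section~\ref{sec:no_loglogn}. Given any algorithm $\cA$, my goal is to produce an input $I$ with $G(I)\cong G$ on which $\cA$ performs $M$ comparisons without ever identifying a duplicate. If $M=0$ there is nothing to show, so assume $M>0$; in particular, $C(L)<n$ for at least one $L$, so $G$ contains a cluster of size at least $L\ge 2$.

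First I would run $\cA$ on a list $X$ of $n$ labelled elements, while the adversary answers comparisons exactly as in the strategic game of Section~\ref{sec:no_loglogn}. After $M$ rounds have elapsed, no ``$=$'' answer has been issued and the tree $T$ is in some state. I then invoke Claim~\ref{clai:isomorphic} to obtain a graph $G'$ on $X$ with $G'\cong G$ such that the elements of every cluster in $G'$ share a common root-to-leaf path of $T$.

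Next I would convert $X$ into an input $I$ from the ordered universe $U$ of leaves of $T$ (ordered left to right). For each non-singleton cluster in $G'$, let $\pi$ be the common root-to-leaf path guaranteed by Claim~\ref{clai:isomorphic}; move every element of that cluster down $\pi$ to its leaf, so that all cluster members occupy the same leaf. For each singleton, move the element down the subtree below its current node to any leaf that has not yet been used (such leaves exist because $T$ has height $n\log n$, so the subtree rooted at $u(x)$ contains far more leaves than elements of $X$). Setting $I[i]:=u(x_i)$ for $X=(x_1,\dots,x_n)$ then yields $G(I)\cong G'\cong G$ by construction.

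Finally I would argue consistency: every move during this assignment step is strictly downward in $T$, and the adversary's game invariants ensure that every answered comparison is determined by the relative subtree positions of the two elements; moving an element further down a root-to-leaf path never violates these comparisons, and the leaf order of $U$ is consistent with all ``$<$'' answers produced. Since the game never answered ``$=$'', the algorithm $\cA$ has not yet exhibited a duplicate in $I$ despite $G(I)$ having at least one edge, so $\cA$ needs strictly more than $M$ comparisons on $I$, proving $\universal(\cA,G)\ge M$. The only mildly delicate step is the singleton-placement argument: one must confirm that the subtree below each element's current node still contains an unused leaf after all cluster-leaves have been reserved, which follows because the tree's height and hence leaf count vastly exceed $n$.
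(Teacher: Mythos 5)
Your proposal is correct and follows essentially the same route as the paper: play the adversarial game for $M$ rounds, invoke Claim~\ref{clai:isomorphic} to realize a copy of $G$ whose clusters each lie on a common root-to-leaf path, push elements down to (distinct) leaves to define a consistent input $I$ with $G(I)\cong G$, and conclude that $\cA$ has not found an existing duplicate after $M$ comparisons. Your added care about the $M=0$ case and the availability of unused leaves for singletons is fine but not a departure from the paper's argument.
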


\begin{proof}
    For any algorithm $\cA$, the adversary plays $M$ rounds
    and defines $U$ to be the set of leaves of $T$.
    Then, it applies Claim~\ref{clai:isomorphic} and pushes all elements in a cluster down to the same distinct leaf.
    This defines input $I$ with $G(I) \cong G$, and all answers given by the adversary are consistent with $I$.
    It follows that, on this input, $\cA$ did not find a duplicate after $M$ comparisons,
    even though there is a duplicate.     
\end{proof}

\subsubsection{A lower bound complementing \Cref{lemm:block_sorting}}
We use our adversarial game to construct an additional lower bound that complements the running time of \Cref{lemm:block_sorting}. 

\begin{definition} \label{def:G_prime}
    Let $G \in \bG$ be an $n$-vertex graph with at least two clusters.
    We define by $G'$ the graph that is obtained by iteratively deleting the largest cluster from $G$ until $G'$ has $n'$ vertices with $n' \le \frac{3}{4} n$. Define for all $L$, the value $C'(L)$ as the total size of clusters in $G'$ of size strictly less than $L$, and by $D'(L)$ the number of clusters of size at least $L$.  
\end{definition}

Consider the adversarial game on a list $X = (x_1, \dots, x_n)$ of $n$ elements.
If this game lasts $\Omega(n)$ rounds, then \cref{theo:lower_median_recursion} is tight.
Therefore, suppose that this game lasts at most $\frac{n}{8}$ rounds. 
We preset a different construction for transforming $X$ into an input $I$ with $I(G) \cong G$,
that yields a stronger bound in this regime.
The idea behind this construction is as follows: denote by $T$ the state of our binary tree after at most $\frac{n}{8}$ rounds. 
Recall that, in each round, the adversary selects at most two elements $x \in X$ and moves $u(x)$ down the tree by one step.
It follows that there are still  $\frac{3}{4} n$ elements in $X$ for which $u(x)$ is at the root.
Since $n' \le \frac{3}{4} n$, we can always realize $G'$ on these elements. However, we can do even better
and use $G'$ to get rid of all elements that are not stored at the root.

\paragraph{Our reconstruction algorithm. }
  First, we consider the graph $G'$ from Definition~\ref{def:G_prime} and we iterate over its clusters decreasing order by size.
  For each cluster $C$, we pick a non-root node $v \in T$ of minimal depth that contains at least one element. 
  If there are at least $|C|$ elements $x \in X$ for which $u(x) = v$, then we use $|C|$ of these elements to form a cluster of size $|C|$, and remove them from $T$.
  If there are fewer than $|C|$ such elements, then we supplement this set by adding elements $x \in X$ where $u(x)$ is the root of $T$.
  Since $n' \le \frac{3}{4} n$, this is always possible.
  Finally, if we run out of non-root nodes of $T$, then 
  all remaining clusters of $G$ (including those not in $G'$) are formed by iteratively selecting an arbitrary
  set of elements $x \in X$ with $u(x) = \Root(T)$.
  
  \begin{claim} \label{clai:xmas_magic}
    If our reconstruction algorithm can form all clusters in $G'$ without running out of non-root elements,
    then the total depth of all elements is $>\frac{1}{16} \min_{L \ge 1}^{C'(L) < n'} ((C'(L) + D'(L)) \max\{1, \log D'(L)\})$
  \end{claim}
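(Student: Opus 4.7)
The plan is to pick a strategic value $L^* \ge 1$ with $C'(L^*) < n'$ (depending on the state of $T$) and prove that the total depth exceeds $\frac{1}{16}(C'(L^*) + D'(L^*)) \max\{1,\log D'(L^*)\}$, which implies the claim since the minimum over $L$ is at most the value at $L^*$. First I would record the reconstruction as a sequence $(v_i, d_i, k_i)_{i=1}^m$, where $v_i$ is the non-root anchor used for the $i$-th cluster $C_i$ of $G'$ (processed in decreasing-size order), $d_i$ is the depth of $v_i$, and $k_i \in [1,|C_i|]$ is the number of non-root elements taken from $v_i$ (the rest, if any, being supplemented from the root). Greedy minimality of the anchor-depth forces monotonicity $d_1 \le d_2 \le \dots \le d_m$: if a shallower non-root node still had elements when processing $C_{i+1}$, it would have been picked instead. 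Since each element $x$ with $u(x) \ne \Root(T)$ contributes $\mathrm{depth}(u(x))$ to the total depth, we obtain $\text{total depth} \ge \sum_i k_i d_i$, so it suffices to lower bound this sum.

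Next I would let $d^* = \lceil \log D'(L^*) \rceil$ and choose $L^*$ so that $D'(L^*)$ is comparable to the number of non-root nodes at depth $\le d^*$ (which is at most $2^{d^*+1} \le 2D'(L^*)$). For the $D'(L^*)\log D'(L^*)$ component, I would combine pigeonhole on depth with a reuse budget: a node $v$ at depth $\le d^*$ can serve as anchor for at most $\lceil e_v/L^*\rceil + 1$ clusters of size $\ge L^*$, since each use but the last consumes $\ge L^*$ non-root elements from $v$. This forces a dichotomy: either (a) at least half of the large clusters have anchor-depth $> d^*$, giving $\sum_{i \le D'(L^*)} k_i d_i = \Omega(D'(L^*)\log D'(L^*))$, or (b) some shallow node is heavily reused, so the total non-root mass at depth $\ge 1$ is $\Omega(L^* D'(L^*))$, which dominates $D'(L^*)\log D'(L^*)$ provided $L^* \ge \log D'(L^*)$ (a property guaranteed by our choice of $d^*$).

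For the $C'(L^*)\log D'(L^*)$ component, by monotonicity the small clusters of $G'$ (processed last) use anchors at depth $\ge d_{D'(L^*)} = \Omega(\log D'(L^*))$. A small cluster $C$ that does \emph{not} supplement contributes $k_C = |C|$ non-root elements at depth $\Omega(\log D'(L^*))$; summed over such clusters, these account for a constant fraction of $C'(L^*)\log D'(L^*)$. Small clusters that \emph{do} supplement empty their anchor, forcing $d_i$ to strictly grow for the next cluster, and can be charged against newly-used distinct nodes at increased depth; summing these charges also yields a term of $\Omega(C'(L^*)\log D'(L^*))$. Combining the large- and small-cluster contributions gives the claimed bound. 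The main obstacle I anticipate is the careful choice of $L^*$ and the bookkeeping between supplemented and non-supplemented clusters so as to avoid double-counting: balancing the reuse-versus-distinct-node dichotomy at the threshold depth $d^* = \Theta(\log D'(L^*))$ is the delicate step.
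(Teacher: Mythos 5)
Your high-level shape (pick a threshold size $L^*$, bound the large- and small-cluster contributions separately using monotonicity of anchor depths and a pigeonhole on shallow nodes) resembles the paper's proof, but three of your key steps do not go through. First, the choice of $L^*$ is circular: with $d^*=\lceil\log D'(L^*)\rceil$ the number of non-root nodes of depth at most $d^*$ is $2^{d^*+1}-2$, which lies between $2D'(L^*)-2$ and $4D'(L^*)$ for \emph{every} $L^*$ (note your inequality $2^{d^*+1}\le 2D'(L^*)$ points the wrong way), so the ``comparability'' condition selects nothing, and you never check $C'(L^*)<n'$. Second, the dichotomy for the $D'\log D'$ term fails: since there are at least $2D'(L^*)-2$ nodes of depth $\le d^*$, the $D'(L^*)/2$ shallow-anchored large clusters can all use distinct anchors, so the negation of (a) forces no reuse at all; and even where reuse is forced, the fallback $\Omega(L^*D'(L^*))$ dominates $D'(L^*)\log D'(L^*)$ only if $L^*\gtrsim\log D'(L^*)$, which is false in general (e.g.\ all clusters of size $2$, so $D'(2)=n'/2$) and is not ``guaranteed by the choice of $d^*$'' --- $d^*$ is a tree depth and constrains the cluster size $L^*$ in no way. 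Third, $d_{D'(L^*)}=\Omega(\log D'(L^*))$ is not implied by monotonicity: if a single depth-$1$ node holds enough elements, arbitrarily many non-supplemented clusters are anchored there and the anchor depth never increases.

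The missing idea is that anchor depth grows with the number of \emph{supplemented} clusters (the paper's \emph{split} clusters), not with the number of clusters: a split cluster empties its anchor, so after $k$ splits every remaining non-root element has depth $\ge\log k-1$, whereas whole clusters may reuse one node indefinitely. The paper accordingly chooses $L=|U_i|$ where $i$ is the smallest index (clusters sorted by increasing size) such that at least half of $U_i,\dots,U_m$ are split; the $\ge\tfrac12 D'(L)$ split large clusters each contribute one non-root element at depth $\Omega(\log D'(L))$, and minimality of $i$ yields a matching of each small split cluster to a distinct non-smaller whole cluster, so at least half the small-cluster mass lies in whole clusters whose elements are all non-root and are placed after those splits, hence also at depth $\Omega(\log D'(L))$. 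You do gesture at ``supplementing empties the anchor,'' but without tying the choice of $L^*$ to this split/whole structure there is no guarantee that enough splits occur before the small clusters are processed, and I do not see how to close the argument along your lines.
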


  \begin{proof}
    Let $T$ be the original state of our binary tree. 
    We call an element $x \in X$ a \emph{root} element if $u(x)$ is in the root of the original tree $T$ and a non-root element otherwise. 
       We call a cluster in $G'$ \emph{whole} if its corresponding cluster in $X$ consists exclusively of non-root elements. 
       We call the cluster \emph{split} otherwise. 
  We make the following observations:
  \begin{enumerate}[nolistsep]
    \item If there have been $k$ split clusters so far, then all non-root elements have depth $\ge \max\{1, \log k-1\}$.
    \item Every whole cluster consists of only non-root elements.
    \item Every split cluster contains at least one non-root element.
  \end{enumerate}
  Let $U_1, \dots, U_m$ be the clusters in $G'$ in increasing order by size.
  Pick the smallest integer $i$ such that at least half the clusters in $U_i, U_{i+1}, \dots, U_m$ are split.
  Put $L = |U_i|$, then there are at least $\frac{1}{2} D'(L)$ such split clusters.
  Thus, by (1) and (3), the total depth of all elements is at least $\frac{1}{4} D'(L) \max\{1, \log (\frac{1}{4} D'(L)) - 1\}$.
  On the other hand, since $i$ is minimal,
  for any $j \le i$, at most half the clusters $U_j, U_{j+1}, \dots, U_{i-1}$ are split.
  Therefore, among $U_1, \dots, U_{i-1}$, we can match each split cluster to a distinct non-smaller whole cluster.
  Hence, the total size of whole clusters among $U_1, \dots, U_{i-1}$ is at least $\frac{1}{2} C'(L)$.
  Thus, by (1) and (2), the total depth of all elements is at least $\frac{1}{2}C'(L) \max\{1, \log(\frac{1}{2} D'(L)) - 1\}$.
  Putting things together, the total depth of all elements is 
  \[
    \ge (\frac{1}{4} D'(L) + \frac{1}{2} C'(L)) \max\{1, \log (\frac{1}{4} D'(L)) - 1\} \ge \frac{1}{16} (D'(L) + C'(L)) \max\{1, \log D'(L)\},
  \]
  Since $L = |U_i|$, we have $L \ge 1$ and $C'(L) < n'$,
  \end{proof}

\begin{lemma} \label{lemm:linear_subset}
  Let $G \in \bG$ be an $n$-vertex graph with at least two clusters and $G'$ be the graph from Definition~\ref{def:G_prime}.
  Then:

  \[
  \hspace{-1.0cm} \min(n'/8, \frac{1}{32} \min_{\substack{L \ge 1\\C'(L) < n'}} (C'(L) + D'(L)) \max\{1, \log D'(L)\}) \ge \frac{1}{1000} \min(n, \min_{\substack{L \ge 1\\C(L) < n/2}} (C(L) + D(L)) \max\{1, \log D(L)\}).
  \]
\end{lemma}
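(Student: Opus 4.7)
The plan is to fix a minimizer $L_0$ on the right-hand side, so that $C(L_0) < n/2$ and $f(G) = (C(L_0) + D(L_0))\max\{1,\log D(L_0)\}$. Let $t$ be the number of clusters removed when forming $G'$, with sizes $s_1 \ge s_2 \ge \cdots \ge s_t = L^*$. Two structural facts will drive the proof: the stopping rule of Definition~\ref{def:G_prime} yields $(t-1)L^* \le \sum_{i<t} s_i < n/4$, and the constraint $C(L_0) < n/2$ means the clusters of $G$ of size $\ge L_0$ have total size $> n/2$. The lemma requires both (i) $n'/8 \ge \tfrac{1}{1000}\min(n,f(G))$ and (ii) $\tfrac{1}{32}f'(G') \ge \tfrac{1}{1000}\min(n,f(G))$, where $f'(G') = \min_{L': C'(L') < n'} (C'(L')+D'(L'))\max\{1,\log D'(L')\}$.

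First suppose $n' \ge n/8$. Then $n'/8 \ge n/64 \ge \min(n,f(G))/1000$, giving (i). For (ii), since $f'(G')$ is a minimum over valid $L'$, it suffices to show $F'(L') \ge \Omega(f(G))$ for every such $L'$, and since $F(L') \ge f(G)$ by minimality of $L_0$, this reduces to bounding the ratio $F'(L')/F(L')$ below by an absolute constant. This follows from controlling how removed clusters affect $C$ and $D$: the number of removed clusters of size $\ge L'$ is at most $t$, while the total size of removed clusters of size $<L'$ is at most $\sum_{i<t} s_i + L^* < n/4 + L^*$. Combining these bounds with $L^* \ge L_0$ (when $L' \le L^*$) and with $\sum_{\text{big}} > n/2$ controls the ratio after a short case split on whether $L' \le L^*$ and on small-$D$ edge cases.

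If $n' < n/8$, then $L^* = s_t > 3n/4 - n' > 5n/8$; if $t \ge 2$ one would have $s_1 \ge s_t > 5n/8$, forcing $s_1 + s_t > n$, a contradiction. Hence $t = 1$ and a single cluster of size $>5n/8$ is removed. If $D(L_0) \ge 2$, then $L^* \ge L_0$ and $D'(L_0) = D(L_0)-1 \ge D(L_0)/2$, so $F'(L_0) \ge \Omega(F(L_0))$ exactly as in the first case. If $D(L_0) = 1$, the removed cluster is the unique one of size $\ge L_0$, so every other cluster has size $<L_0$ and $f(G) = C(L_0)+1 \le n'+1$; then $n'/8 \ge (n'+1)/1000$ settles (i) for $n' \ge 1$, and a direct computation (e.g.\ taking $L'=1$) shows (ii) since $G'$ contains the full non-huge part.

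The main obstacle is bound (ii) in the case $n' \ge n/8$: obtaining the uniform constant ratio $F'(L')/F(L')$ across all valid $L'$ requires simultaneously applying the stopping-rule inequality $(t-1)L^* < n/4$ and the big-cluster mass bound $\sum_{\text{big}} > n/2$ while tracking the interaction of $L'$ with $L^*$ and separately handling the small-$D$ edge cases in which $\log D$ becomes $O(1)$. Keeping the multiplicative constants below $1/1000$ and $1/32$ is the remaining bookkeeping.
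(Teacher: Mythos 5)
Your overall strategy mirrors the paper's (same graph $G'$, same quantities $C'$, $D'$, and a similar dichotomy on whether a single huge cluster was removed), but the central quantitative step is missing, not merely ``bookkeeping''. To get $F'(L') \ge c\,F(L')$ you must control $D'(L') = D(L') - t$ in terms of $D(L')$, i.e., show something like $t \le \tfrac12 D(L') + 1$. The two facts you assemble --- $(t-1)L^* \le \sum_{i<t}s_i < n/4$ and $\sum_{\mathrm{big}} > n/2$ --- do not suffice: the first only bounds $t-1$ by $n/(4L^*)$, and there is no a priori relation between $n/(4L^*)$ and $D(L')$, since the remaining clusters of size $\ge L'$ can be far smaller than $L^*$ and few in number. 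The missing ingredient is that the deletion rule always removes a \emph{largest} cluster, so the $t-1$ clusters deleted before the last one are the $t-1$ largest among the $D(L')$ clusters of size $\ge L'$; if $t-1 > \tfrac12 D(L')$, they would carry more than half of the total mass $n - C(L') \ge n/2$ of those clusters, i.e., more than $n/4$, contradicting the stopping rule (the graph still had more than $\tfrac34 n$ vertices before the last deletion). This yields $D'(L') \ge \tfrac12 D(L') - 1$, which together with $D'(L') \ge 1$ (from $C'(L') < n'$) gives the constant ratio. That averaging step is the heart of the lemma, and it is exactly what the paper's proof supplies.

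There are also two quantifier problems. First, your chain $F'(L') \ge c\,F(L') \ge c\,f(G)$ uses minimality of $L_0$ on the right-hand side, which is only valid when $C(L') < n/2$; a left-hand-feasible $L'$ (one with $C'(L') < n'$) need not satisfy this. You must treat the case $C(L') \ge n/2$ separately, where instead $F'(L') \ge C'(L') = C(L') \ge n/2$ and the $\min(n,\cdot)$ on the right saves you. (It helps to first observe that left-hand feasibility forces $L' \le L^*$, whence $C'(L') = C(L')$ --- you seem to allow $L' > L^*$, which cannot occur.) Second, in the $n' < n/8$ subcases you lower-bound $F'$ only at the specific values $L' = L_0$ or $L' = 1$; since the left-hand side is a minimum over all feasible $L'$, exhibiting one good $L'$ does not bound that minimum from below. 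The paper avoids this by anchoring at the \emph{left-hand} minimizer from the start and comparing its value to the right-hand minimum, which is the natural direction when proving an inequality of the form $\min \ge \min$.
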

\begin{proof}
  We note that $G'$ is non-empty since $G$ consists of at least two clusters.
  Let $L_1$ be the size of the smallest cluster that was deleted from $G$ (i.e., the last cluster that we deleted before we obtained $G'$) and let
  \[
  L_0 = \arg \min_{\substack{L \ge 1\\C'(L) < n'}} (C'(L) + D'(L)) \max\{1, \log D'(L)\}).
  \]
  Then $L_0 \le L_1$ since $C'(L_0) < n'$, so $C'(L_0) = C(L_0)$.
  Suppose first that $L_1 \ge \frac{n}{4}$. Then, we deleted a single cluster of size $L_1$,
  so $D'(L_0) = D(L_0) - 1$. This implies
  \[
      (C'(L_0) + D'(L_0)) \max\{1, \log D'(L_0)\}) \ge \frac{1}{4}(C(L_0) + D(L_0)) \max\{1, \log D(L_0)\} .
  \]
  Otherwise, we have $L_1 < \frac{n}{4}$.
  We now make a case distinction. If $C(L_0) \ge \frac{n}{2}$, then
  \[
    (C'(L_0) + D'(L_0)) \max\{1, \log D'(L_0)\}) \ge \frac{n}{2}.
  \]
  Otherwise, we consider all clusters of size $\ge L_0$.
  Initially, there were $D(L_0)$ such clusters, of total size $n - C(L_0) \ge \frac{n}{2}$.
  We deleted one cluster of size $L_1$,
  plus some clusters of total size at most $\frac{n}{4}$.
  This is at most half of $\frac{n}{2}$, and we always deleted a largest cluster,
  so we deleted at most $1 + \frac{1}{2}D(L_0)$ clusters in total.
  This shows $D'(L_0) \ge \frac{1}{2} D(L_0) - 1$. We also have $D'(L_0) \ge 1$ since $C'(L_0) < n'$. Thus,
  \[
    (C'(L_0) + D'(L_0)) \max\{1, \log D'(L_0)\}) \ge \frac{1}{8} (C(L_0) + D(L_0)) \max\{1, \log D(L_0)\}. \qedhere
  \]
\end{proof}

\begin{proposition} \label{prop:lower_block_sorting}
    Let $G \in \bG$ and define $M := \frac{1}{1000} \min \Big\{n, \min_{L \ge 1}^{C(L) < n/2} (C(L) + D(L)) \max\{1, \log D(L)\} \Big\}$.
    Then every algorithm $\cA$ satisfies
    \[
        \universal(\cA, G) \ge M.
    \]
\end{proposition}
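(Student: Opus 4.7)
The plan is to combine the adversarial strategic game from Section~\ref{sec:no_loglogn} with the reconstruction procedure described just above Claim~\ref{clai:xmas_magic}, using Claim~\ref{clai:xmas_magic} and Lemma~\ref{lemm:linear_subset} as the main analytic ingredients. Given any deterministic algorithm $\cA$, the adversary plays the strategic game on a list $X$ of $n$ labeled elements for exactly $M$ rounds, always answering ``$<$'' or ``$>$'' and never ``$=$''. Since each round moves at most two elements one level deeper into the binary tree $T$, the total depth of all elements after $M$ rounds is at most $2M$. To finish the lower bound I then need to exhibit an input $I$ with $G(I)\cong G$ whose pairwise comparisons are consistent with these $M$ answers: such an $I$ witnesses that $\cA$ has made $M$ comparisons without ever observing equality and therefore cannot yet have output a witness pair, which yields $\universal(\cA,G)\ge M$.

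To produce this $I$, I would invoke the reconstruction procedure defined above Claim~\ref{clai:xmas_magic}. Since $M\le n/1000$, at most $2M < n/4$ elements are non-root, so more than $3n/4 \ge n'$ root elements remain at all times, and every supplement step while building $G'$ therefore always has enough root elements to draw from. The argument then splits on whether the procedure exhausts the non-root elements while still building $G'$. In the first case, where the procedure forms every cluster of $G'$ without ever running out of non-root elements, Claim~\ref{clai:xmas_magic} gives that the total depth strictly exceeds $\tfrac{1}{16}\min_{L \ge 1,\, C'(L)<n'}(C'(L)+D'(L))\max\{1,\log D'(L)\}$, and Lemma~\ref{lemm:linear_subset} together with the definition of $M$ forces this quantity to be at least $2M$, contradicting the a priori upper bound of $2M$ on the total depth. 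Hence this case cannot arise.

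Consequently, the procedure must run out of non-root elements while building some cluster of $G'$, after which all remaining clusters (of both $G'$ and $G\setminus G'$) are assembled from root elements. A direct counting check --- the root elements remaining at that moment match exactly the total size of the clusters still to be formed --- shows that this fallback always terminates successfully, and pushing each fully formed cluster down to a distinct hitherto-unoccupied leaf of $T$ yields an input $I$ with $G(I)\cong G$; since elements are only ever pushed further down in $T$, the induced order on $I$ remains consistent with every answer the adversary has given. The main obstacle I expect is the bookkeeping for this fallback phase --- verifying that disjoint unoccupied leaves can always be chosen and that the Case~B counting really works out in regimes where Claim~\ref{clai:xmas_magic} does not yet force a contradiction via depth --- but once this is in place, Case~A is ruled out by the two-line depth contradiction above and Case~B reduces to an arithmetic check.
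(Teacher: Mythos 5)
Your proposal follows the paper's own proof essentially step for step: play the adversarial game for $M$ rounds, bound the total depth by $2M$, rule out the case that every cluster of $G'$ is formed from a non-root node by combining \cref{clai:xmas_magic} with \cref{lemm:linear_subset} (which gives $\tfrac{1}{16}\min_{L}(C'(L)+D'(L))\max\{1,\log D'(L)\} \ge 2M$), and then realize $G$ via the root-element fallback and push clusters to distinct leaves. The only point the paper handles that you omit is the degenerate case where $G$ consists of a single cluster --- there $G'$ and \cref{lemm:linear_subset} are inapplicable, but the bound is trivial since $M<1$ while any algorithm must make at least one comparison.
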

\begin{proof}
    If $G$ contains only one cluster, then the proposition holds since $M < 1$, while any algorithm needs to do at least one comparison.
    Otherwise, let $M' = \min(n'/8, \frac{1}{32} \min_{L \ge 1}^{C'(L) < n'} (C'(L) + D'(L)) \max\{1, \log D'(L)\})$.
    For any algorithm $\cA$, the adversary plays the game for $M$ rounds.
    By \cref{lemm:linear_subset}, $M' \ge M$, so the total depth of all elements is at most $2 M'$.
    Thus, by \cref{clai:xmas_magic}, there exists a graph $G''$ on $X$ that is isomorphic to $G$ such that,
    for each cluster in $G''$, all elements share a root-to-leaf path in $T$.
    Let $U$ be the set of leaves of $T$.
    By pushing the elements in the same cluster down to a distinct leaf,
    the adversary constructs an input $I$ such that $G(I) = G'' \cong G$
    and all answers given by it are consistent with $I$.
    It follows that, on this input, $\cA$ did not find a duplicate after $M$ comparisons, even though there is a duplicate.
\end{proof}

\begin{theorem} \label{theo:lower_bound}
  Let $G \in \bG$. Then every algorithm $\cA$ satisfies
  \[
    \universal(\cA, G) \ge \frac{1}{1000} \min\Big(n + \min_{\substack{L \ge 2\\C(L) < n}} C(L) \log \frac{C(L)}{L}, \min_{\substack{L \ge 1\\C(L) < n/2}} (C(L) + D(L)) \max\{1, \log D(L)\}\Big).
  \]
\end{theorem}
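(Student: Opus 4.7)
The plan is to derive the theorem by combining the two lower bounds already established: \cref{theo:lower_median_recursion} gives $\universal(\cA,G) \ge M_{\mathrm{med}} := \min_{L\ge 2,\,C(L)<n} \tfrac{1}{4}C(L)\max\{0,\log(C(L)/(2L))\}$, and \cref{prop:lower_block_sorting} gives $\universal(\cA,G) \ge M_{\mathrm{block}} := \tfrac{1}{1000}\min(n,B)$, where $B$ is the second expression in the outer $\min$. Writing $A$ for the first expression in the outer $\min$, it is enough to verify that $\max(M_{\mathrm{med}},M_{\mathrm{block}})$ is at least a constant fraction of $\min(A,B)$.

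The arithmetic core of the argument is the inequality $A \le 2n + 4M_{\mathrm{med}}$. To obtain it I would take $L_0$ achieving the minimum in the definition of $M_{\mathrm{med}}$ and apply the identity $\log(C(L_0)/L_0)=\log(C(L_0)/(2L_0))+1$ together with the elementary bound $x\le\max\{0,x\}$:
\[
 C(L_0)\log\tfrac{C(L_0)}{L_0} \;=\; C(L_0)\log\tfrac{C(L_0)}{2L_0} + C(L_0) \;\le\; 4M_{\mathrm{med}} + C(L_0) \;\le\; 4M_{\mathrm{med}} + n.
\]
Since the minimum defining $A-n$ is at most the value at $L_0$, this yields $A \le 2n+4M_{\mathrm{med}}$; in effect, this is the step that bridges the $\log(C/L)$ factor in $A$ and the $\log(C/(2L))$ factor in $M_{\mathrm{med}}$.

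I would then finish with a two-case analysis. If $B\le n$, then $M_{\mathrm{block}}=B/1000$ and $\min(A,B)\le B$ give the bound immediately. If $B>n$, then $M_{\mathrm{block}}=n/1000$ and I would split on whether $M_{\mathrm{med}}$ dominates $n$: when $M_{\mathrm{med}} \le n$ the inequality $A\le 2n+4M_{\mathrm{med}} \le 6n$ lets me charge $\min(A,B)$ to $M_{\mathrm{block}}$, and when $M_{\mathrm{med}}>n$ the same inequality reads $A\le 6M_{\mathrm{med}}$ and I charge to $M_{\mathrm{med}}$ instead.

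The main obstacle I anticipate is quantitative, not qualitative: tracking the constants carefully so that the final factor comes out as the stated $\tfrac{1}{1000}$ rather than a weaker absolute constant. The delicate sub-case, where the expected constant loss is largest, is the one where $B>n$, $M_{\mathrm{med}}=0$, and $\min_L C(L)\log(C(L)/L)$ is a positive fraction of $n$; there I would argue more carefully that the presence of a bad $L$ with $C(L)\le 2L$ forces $\min_L C(L)\log(C(L)/L)$ itself to be $O(n)$, so that $\min(A,B)\lesssim n$ and $M_{\mathrm{block}}=n/1000$ alone absorbs $\min(A,B)/1000$.
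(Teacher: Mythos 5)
Your proposal is correct and takes essentially the same route as the paper, whose entire proof of this theorem is the single line ``Combine \cref{theo:lower_median_recursion,prop:lower_block_sorting}''; your bridging inequality $A \le 2n + 4M_{\mathrm{med}}$ and the case split on $B \lessgtr n$ are exactly the arithmetic the paper leaves implicit. Your worry about the constant is well-founded but immaterial: the literal combination only yields a factor of roughly $1/6000$ in the sub-case $B > n$, $M_{\mathrm{med}} \le n$ (e.g.\ when $M_{\mathrm{med}} = 0$ and $A$ is close to $2n$, the two propositions give only $n/1000$ against a claimed $2n/1000$), and the paper's own one-line proof has the identical slack, so this is looseness in the stated constant rather than a gap in your argument.
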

\begin{proof}
    Combine \cref{theo:lower_median_recursion,prop:lower_block_sorting}.
\end{proof}

By combining \cref{theo:algo_oblivious} with \cref{theo:lower_bound}, we get:
\begin{theorem}
  There is an algorithm $\cA$ for element distinctness that is universally $O(\log \log n)$-competitive,
  that is, there is a constant $c > 0$ such that for every $G \in \bG$ and all algorithm $\cA'$, we have
  \[
    \universal(\cA, G) \le c \max\{1, \log \log (n)\} \cdot \universal(\cA', G).
  \]
\end{theorem}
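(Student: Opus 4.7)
The plan is to simply combine the matching upper bound from \cref{theo:algo_oblivious} with the matching lower bound from \cref{theo:lower_bound}. Define the functions
\[
  \Phi_1(G) := n + \min_{\substack{L \ge 2\\ C(L) < n}} C(L)\log\tfrac{C(L)}{L},
  \qquad
  \Phi_2(G) := \min_{\substack{L \ge 1\\ C(L) < n/2}} (C(L)+D(L))\max\{1,\log D(L)\},
\]
and let $\Phi(G) := \min\{\Phi_1(G), \Phi_2(G)\}$. My goal is to show that \cref{theo:algo_oblivious} produces an algorithm $\cA$ whose universal running time is $O(\log\log n)\cdot \Phi(G)$, and that \cref{theo:lower_bound} gives $\universal(\cA', G)\ge \Omega(\Phi(G))$ for every $\cA'$. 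Dividing these two bounds immediately yields the competitive ratio.

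First I would verify that the ``$n + \min$'' in \cref{theo:lower_bound} and the ``$\min(n + \cdot)$'' appearing in \cref{theo:algo_oblivious} are the same quantity. This is because $n$ is independent of $L$, so
\[
  \min_{\substack{L \ge 2\\ C(L) < n}}\bigl(n + C(L)\log\tfrac{C(L)}{L}\bigr) \;=\; n + \min_{\substack{L \ge 2\\ C(L) < n}} C(L)\log\tfrac{C(L)}{L} \;=\; \Phi_1(G).
\]
Thus the expression inside the parentheses of \cref{theo:algo_oblivious} is exactly $\Phi(G)$, and the expression inside the parentheses of \cref{theo:lower_bound} is exactly $\Phi(G)$ as well (up to the $\tfrac{1}{1000}$ constant).

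Next I would instantiate: pick $\cA$ to be the algorithm from \cref{theo:algo_oblivious}, so there is an absolute constant $c_1$ with $\universal(\cA, G) \le c_1 \log\log(n)\cdot \Phi(G)$ for all sufficiently large $n$. For an arbitrary competitor $\cA'$, \cref{theo:lower_bound} yields $\universal(\cA', G) \ge \tfrac{1}{1000}\Phi(G)$. Combining the two inequalities gives $\universal(\cA, G) \le 1000\, c_1 \log\log(n) \cdot \universal(\cA', G)$. Setting $c := 1000\, c_1$ and wrapping the $\log\log n$ in $\max\{1,\log\log n\}$ (to handle small $n$) completes the argument.

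There is no real obstacle here; the only thing to watch is the cosmetic matching between the two bounds. Since the upper and lower bounds are stated in exactly the same functional form (a minimum of the same two quantities $\Phi_1$ and $\Phi_2$), and differ only by constant factors in front of each summand and by a global factor of $\log\log n$ on the upper side, the theorem is immediate once the equality $\Phi_1(G) = \min_L(n + C(L)\log(C(L)/L))$ is noted.
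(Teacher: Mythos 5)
Your proposal matches the paper's own argument exactly: the paper derives this theorem by directly combining \cref{theo:algo_oblivious} with \cref{theo:lower_bound}, and your observation that the two bounds share the same functional form (after pulling the $n$ out of the minimum) is the only detail that needs checking. Correct and essentially identical to the paper's proof.
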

This theorem is tight by \cref{theo:lower_competitive}.

\section{Preprocessing Algorithms for Element distinctness}

First, we consider a variant of universal optimality that was proposed by Cardinal et al.~\cite[STOC'10]{cardinal_sorting_2010} who consider the following: ``If one is allowed to preprocess the universally fixed graph $G$, can one then design competitive algorithms?''

Formally, we consider a two-stage algorithmic problem where the algorithm is first given a graph $G$ to preprocess.
Then, the algorithm gets the list $I$ of $n$ elements from some ordered set $U$,
with the promise that $G(I) \cong I$. From this point on, the algorithm can do comparisons between elements in $I$.
In this model, we want to minimize both the processing time and the number of comparisons performed.
To avoid trivial lower bounds from reading a dense graph $G$, we encode the graph with $m$ clusters as a sequence $s_1, \dots, s_m$ of cluster sizes where $s_1 + \dots + s_m = n$.

By \cref{theo:lower_bound}, even with unbounded preprocessing time,
any algorithm needs to perform at least
\[
\frac{1}{1000} \min\Big(n + \min_{\substack{L \ge 2\\C(L) < n}} C(L) \log \frac{C(L)}{L}, \min_{\substack{L \ge 1\\C(L) < n/2}} (C(L) + D(L)) \max\{1, \log D(L)\}\Big) \quad \text{comparisons}.
\]

We show that, after $O(m)$ preprocessing, we can give an $O(1)$-competitive algorithm.

\begin{theorem} \label{theo:algo_preprocessing}
    There is an algorithm that can preprocess any $G \in \bG$
    encoded as as $s_1, \dots, s_m$ in $O(m)$ time,
    and then, given an input $I$ with $G(I) \cong G$, can find a duplicate in 
    \[
    O\Big(\min\Big(n + \min_{\substack{L \ge 2\\C(L) < n}} C(L) \log \frac{C(L)}{L}, \min_{\substack{L \ge 1\\C(L) < n/2}} (C(L) + D(L)) \max\{1, \log D(L)\}\Big)\Big) \quad \text{comparisons}.
    \]
\end{theorem}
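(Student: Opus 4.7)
The plan is to derandomize \cref{theo:algo_oblivious}: that algorithm pays a $\log\log n$ factor solely because it has to run $O(\log\log n)$ parallel instances to guess the right value of $L$, and I will use the preprocessing phase to compute a near-optimal $L$ directly from the cluster sizes $s_1,\dots,s_m$. Once $L_1^\star$ and $L_2^\star$ are known (minimizing the \texttt{Median Recursion} and \texttt{Block Sorting} objectives, respectively), the execution phase runs \texttt{Median Recursion} with parameter $L_1^\star$ in parallel with \texttt{Block Sorting} with parameter $k=2\,D(L_2^\star)$, halting as soon as either reports a duplicate. The claimed comparison bound then follows from \cref{lemm:median_recursion} and \cref{lemm:block_sorting} applied to $L_1^\star$ and $L_2^\star$ respectively, since the clairvoyant minima of \cref{theo:clair_median} are attained up to constant factors at these choices.

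\textbf{Preprocessing.} The key observation is that both objective functions change by at most a constant factor when $L$ is scaled by a factor of two, so restricting the search to $L\in\{2^0,2^1,2^2,\dots\}$ loses only a constant factor in each. I plan to scan $s_1,\dots,s_m$ once, compute $b_i:=\lfloor\log_2 s_i\rfloor$ via a standard word-RAM operation, and group into buckets keeping, per non-empty bucket $j$, the aggregates $\sigma_j=\sum_{i:b_i=j}s_i$ and $c_j=|\{i:b_i=j\}|$. A left-to-right sweep over the non-empty buckets in increasing order of $j$ produces prefix quantities from which
\[
C(2^j)=\sum_{k<j}\sigma_k,\qquad D(2^j)=\sum_{k\ge j}c_k
\]
can each be read in $O(1)$. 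I then evaluate both objective formulas at every candidate $2^j$, record the argmins $L_1^\star,L_2^\star$, and store $D(L_2^\star)$ for use by \texttt{Block Sorting}.

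\textbf{Main obstacle.} The delicate point is confining the preprocessing to $O(m)$ time, especially when $m\ll\log n$: a naive array indexed by every $j\in\{0,\dots,\lfloor\log_2 n\rfloor\}$ would cost $\Theta(\log n)$ merely to allocate and scan, which breaks the bound. I will address this by storing only the at most $m$ non-empty buckets in a list keyed by $b_i$, and sorting them by key using word-RAM integer sorting on keys of $O(\log\log n)$ bits; since each $b_i$ is produced in $O(1)$ time and the universe of keys has size $O(\log n)$, a standard radix/bucket routine runs in $O(m)$ time. After sorting, the aggregation, the prefix-sum sweep, and the evaluation of both objectives at the $O(m)$ distinct candidate values all amount to a single linear pass over the non-empty buckets. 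Because restricting $L$ to powers of two (equivalently, to the boundaries of the non-empty buckets) sacrifices only a constant factor in each objective, the pair $(L_1^\star,L_2^\star)$ returned by the preprocessing is within a constant factor of the clairvoyant optima of \cref{theo:clair_median}, which is exactly what the theorem requires.
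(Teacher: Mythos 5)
Your high-level plan (precompute near-optimal parameters for \texttt{Median Recursion} and \texttt{Block Sorting}, then run both in parallel) matches the paper's, but the two load-bearing steps of your argument both have genuine gaps. First, the claim that \emph{each} objective loses only a constant factor when $L$ is restricted to powers of two is false for the \texttt{Block Sorting} objective $f(L)=(C(L)+D(L))\max\{1,\log D(L)\}$. Take $G$ with one cluster of size $n/2$, one cluster of size $3$, and $(n/2-3)/2$ clusters of size $2$. Then $C(3)=n/2-3<n/2$ and $D(3)=2$, so $f(3)=n/2-1$; but $f(1)=f(2)=\Theta(n\log n)$ (since $D(2)=\Theta(n)$), and $C(L)\ge n/2$ for every $L\ge 4$, so all $L\ge 4$ are infeasible. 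Hence the minimum over powers of two is a $\Theta(\log n)$ factor worse than the true minimum: rounding $L^\ast$ down can inflate $D$ arbitrarily, and rounding up can violate the constraint $C(L)<n/2$. In this particular example your final bound survives only because the median branch (with $C(2)=0$) happens to give $O(n)$; showing that one of the two branches always rescues the combined bound requires a cross-term argument (e.g., that the bad case forces $C(L^\ast)=\Omega(n)$ and a cluster of size $\Omega(n/D(L^\ast))$, which in turn makes some power-of-two median parameter cost $O(n\log D(L^\ast))$) that your proof does not contain. The paper avoids this entirely by choosing candidates at which $D$ (not $L$) decreases geometrically, which is what actually controls the $\log D$ factor.

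Second, the $O(m)$ preprocessing is not established. You correctly identify that the bottleneck is ranking the $m$ bucket keys from a universe of size $\Theta(\log n)$, but ``a standard radix/bucket routine'' does not run in $O(m)$ time there: counting sort costs $O(m+\log n)$, and radix sort with base $m$ costs $O(m\log_m\log n)$, which is $O(m)$ only when $m\ge(\log n)^{\Omega(1)}$. For $m$ between $\omega(1)$ and $(\log n)^{o(1)}$ you are left with comparison sorting at $O(m\log m)\ne O(m)$, and note that in this regime you cannot charge the overhead to the query phase either, since $f(1)=m\log m$ bounds the required number of comparisons, so the target can itself be as small as $O(m\log m)$. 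The paper sidesteps both issues at once: it first observes that when the block-sorting optimum is $\Omega(n)$ all computation can be deferred to query time, and otherwise only $L_2$ matters and only $\log D(L_2)$ needs to be approximated up to a constant; it then finds such an approximation by recursive median selection on the list of cluster sizes, whose geometrically shrinking subproblems sum to $O(m)$ even on a pointer machine. Your proposal needs a replacement for both of these ideas before the $O(m)$ bound and the stated comparison bound are justified.
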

\begin{proof}
    Let
    \[
    L_1 = \arg\min_{\substack{L \ge 2\\C(L) < n}} C(L) \log \frac{C(L)}{L}, \quad L_2 =  \arg\min_{\substack{L \ge 1\\C(L) < n/2}} (C(L) + D(L)) \max\{1, \log D(L)\}.
    \]
    We first describe such an algorithm with $O(n m)$ preprocessing time:
    For each $L \in [n]$, compute $C(L)$ and $D(L)$ in $O(m)$ time.
    Determine $L_1$ and $L_2$.
    Then, given $I$, run either median recursion with $L = L_1$, or block sorting with $k = 2 D(L_2)$.
    By \cref{lemm:block_sorting,lemm:median_recursion}, this yields the number-of-comparisons bound in \cref{theo:algo_preprocessing}.

    To reduce the preprocessing time to $O(m \log m)$, observe that $C(L)$ and $D(L)$ are step functions that
    only change when $L = s_i$ for some $i \in m$. Hence, there are $O(m)$ possible values for $L_1$ and $L_2$.
    We can sort $s_1, \dots, s_m$ in $O(m \log m)$ time, and
    compute $C(L), D(L)$ for the relevant values of $L$ via sweep line in $O(m)$ time.
    Alternatively, we can achieve a preprocessing time of $O(n)$ by bucket sorting $s_1, \dots, s_m$.
    (This works even on a pointer machine, since $s_1 + \dots + s_m = n$.)

    We can further reduce the preprocessing time to $O(m)$.
    If $(C(L_2) + D(L_2)) \max\{1, \log D(L_2)\} \in \Omega(n)$,
    then the algorithm has to do $\Omega(n)$ comparisons,
    so we skip doing any preprocessing and compute $C(.), D(.), L_1, L_2$ in $O(n)$ time once we get access to $I$.
    Otherwise, $L_1$ is irrelevant (due to the $O(n + \dots)$ term) and we only have to consider $L_2$.
    We observe that we do not need to compute $L_2$ exactly.
    In fact, it suffices to approximate $\max\{1, \log D(L_2)\}$ up to a constant factor.
    Consider the following recursive procedure that operates on some list $S'$,
    where initially $S' = s_1, \dots, s_m$:
    Recursively find the median of $S'$, partition at it,
    and recurse on the half of $S'$ that contains values at least as big as the median.
    For $j = 1, \dots, \lceil \log m \rceil$,
    let $t_j$ be the minimum value of $S'$ in the $j$-th recursive call.
    During the recursion, for each $j$, compute $C(t_j)$ and $D(T_j)$. Afterwards, put
    \[
    L_2' = \arg \min_{j=1}^{\lceil \log m \rceil} \big((C(t_j) + D(t_j)) \max\{1, \log D(T_j)\}\big).
    \]
    The recursive procedure takes $O(m)$ time in total (even on a pointer machine).
    Suppose that the $\ell$-th recursive calls is the final recursive call where $L_2 \in S'$.
    Then $t_\ell \le L_2$ and $D(t_\ell) \le 2 D(L_2) + 1$.
    This implies
    \[
    \begin{aligned}
    (C(L_2') + D(L_2')) \max\{1, \log D(L_2')\} &\le (C(t_\ell) + D(t_\ell)) \max\{1, \log D(t_\ell)\} \\
    &\le 3 (C(L_2) + D(L_2)) \max\{1, \log D(L_2)\}.
    \end{aligned}
    \]
    Thus, in this case, running block recursion with $k = 2 D(L_2')$ yields the number-of-comparisons bound in \cref{theo:algo_preprocessing}.
\end{proof}

\section{Order Optimality for Element distinctness}

The main body of this paper considers universal optimality where the fixed object is an undirected graph $G$. 
We observed that each input $I$ can induce an undirected graph $G(I)$ where the vertex set is $\{1, 2, \ldots, n \}$ and there is an edge between $i$ and $j$ if and only if $I[i] = I[j]$. For every fixed undirected graph $G \in \bG_n$, we were interested in an algorithm's worst-case over all inputs $I$ where $G(I) \cong I$. 

However, we noted in the introduction that that previous papers~\cite{afshani2009instance,Hoog2024Tight,Haeupler2025Sorting} instead consider fixing, for (partially) ordered input $I$, the directed acyclic graph $\overrightarrow{G}(I)$.
This is a directed graph where the vertex set is $\{1, 2, \ldots, n \}$ and there is an edge from $i$ to $j$ if and only if $I[i] < I[j]$.
Note that the undirected complement of $\overrightarrow{G}(I)$ is $G(I)$.

We observe that universally optimality can also be defined by fixing a such DAG. 
Formally, denote by $\overrightarrow{\bG}_n$ the set of all $n$-vertex DAGs $\overrightarrow{G}$ where the undirected complement of $\overrightarrow{G}$  is in $\bG$. 
Then, given a fixed $\overrightarrow{G} \in \overrightarrow{\bG}_n$, we can alternatively define the universal running time of an algorithm as the worst-case running time over all inputs $I$ such that $\overrightarrow{G}(I) \cong G$.
To distinguish between these two concepts, we use a different notation for the latter notion of optimality: 

\[
  \textnormal{Let } \overrightarrow{G} \in \overrightarrow{\bG}_n \textnormal{. We define: } \quad \daguniversal(\cA,  \overrightarrow{G}) = \max_{I : \overrightarrow{G}(I) \cong \overrightarrow{G}} \runtime(\cA, I).
\]

Observe that the above is equivalent to fixing $I$ and considering an algorithm's worst-case running time over all input lists that are a permutation of $I$ (thus, it is equivalent to the notion of \emph{instance optimality in the order-oblivious setting} in~\cite{afshani2009instance}). 
Recall that, to distinguish between our two models of optimality, we say that an algorithm is $O(f(n))$-\emph{order-competitive} if there exists a function $g \in O(f)$ such that, for all sufficiently large $n$, for all graphs $\overrightarrow{G} \in \overrightarrow{\bG}_n$, and for all algorithms $\cA'$, we have
\[
\textnormal{Order}(\cA,\overrightarrow{G}) \leq g(n) \cdot \textnormal{Order}(\cA',\overrightarrow{G}).
\]

We show that a simple doubling algorithm achieves tight order-competitive bounds. 
Consider the following algorithm $\cA$:
Starting from $k=2$, pick $\min(n, k)$ arbitrary elements and sort them.
If there is a pair of equal elements, output them, otherwise double $k$ and repeat.
\begin{lemma} \label{lemm:upper_dag_universal}
    The algorithm $\cA$ is $O(\log n)$-order-competitive,
    that is, we prove that there exists a constant $c>0$ such for every $n$-vertex DAG $\overrightarrow{G} \in \overrightarrow{\bG}$ and every algorithm $\cA'$, we have $    \daguniversal(\cA, \overrightarrow{G}) \le c \max\{1, \log(n)\} \cdot \daguniversal(\cA', \overrightarrow{G}).$
\end{lemma}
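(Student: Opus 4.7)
The analysis hinges on a single parameter: let $m$ denote the number of distinct values in any input $I$ with $\overrightarrow{G}(I)\cong\overrightarrow{G}$, equivalently the number of cliques in the undirected complement of $\overrightarrow{G}$. I will bound both $\daguniversal(\cA,\overrightarrow{G})$ and $\daguniversal(\cA',\overrightarrow{G})$ in terms of $m$; the claimed $O(\log n)$ factor will emerge as $\log m\le \log n$ between them.

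For the upper bound on $\cA$, by the pigeonhole principle any $m+1$ elements of $I$ contain a duplicate. Hence in the first round where the doubling parameter satisfies $k\ge m+1$, sorting the $k$ picked elements terminates the algorithm. Summing the cost of the (geometrically growing) rounds gives $\daguniversal(\cA,\overrightarrow{G})\in O(m\log m)$ whenever $m<n$. In the edge case $m=n$, $\cA$ performs a full sort in $O(n\log n)$; Fredman's $\Omega(n\log n)$ bound applies to every $\cA'$, so the ratio is $O(1)$.

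For the matching lower bound I will argue that $\daguniversal(\cA',\overrightarrow{G})\ge \lceil m/2\rceil$ for every deterministic $\cA'$ when $m<n$, via an adversary argument in the spirit of Section~\ref{sec:no_loglogn}. The adversary answers every comparison with ``$<$'' or ``$>$'' (never ``$=$''), consistently with some virtual total order it maintains on the set $V$ of positions that $\cA'$ has so far touched. After $T$ comparisons, $|V|\le 2T$. As long as $|V|\le m$, I claim the adversary can commit to an input $I$ that (i) is consistent with every answer already given, (ii) satisfies $\overrightarrow{G}(I)\cong \overrightarrow{G}$, and (iii) assigns the positions in $V$ pairwise distinct values: the construction is to map each touched position to its own cluster (choosing the cluster order to agree with the virtual total order) and then distribute the $n-|V|$ untouched positions among the $m$ clusters to realize the prescribed cluster sizes $s_1,\dots,s_m$, which is always feasible since $\sum_i s_i-|V|=n-|V|$. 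On this input, every pair compared so far has distinct values, so $\cA'$ has no evidence of any equality and cannot correctly output a witness; hence it must have performed at least $T\ge \lceil m/2\rceil$ comparisons.

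Combining the two bounds gives $\daguniversal(\cA,\overrightarrow{G})\le O(m\log m)\le O(\log n)\cdot \daguniversal(\cA',\overrightarrow{G})$, as required. The main obstacle is the commitment step of the adversary argument: one must verify that the virtual total order on $V$ always extends to some actual input of the prescribed cluster-size multiset without inadvertently forcing any touched pair to be equal. This is handled by the ``one touched position per cluster'' assignment above, which is well-defined precisely when $|V|\le m$; after that the untouched positions can be freely slotted into the clusters, and the fact that $\overrightarrow{G}$ is specified only up to isomorphism means the adversary is free to choose the cluster ordering.
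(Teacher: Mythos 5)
Your proof is correct, but it takes a somewhat different route from the paper's. The paper centers its argument on a quantity $k$ defined from the behaviour of $\cA$ itself: the largest doubling round at which some input with the given $\overrightarrow{G}$ still defeats $\cA$. It gets the upper bound $O(k\log n)$ immediately, and for the lower bound it observes that the witness input contains $k$ pairwise-distinct elements, so one may \emph{permute} that input so that $\cA'$ spends its first $k/2$ comparisons entirely among them; this exploits the order-oblivious maximum directly and needs no explicit adversary or edge-case analysis. You instead parameterize by the isomorphism invariant $m$ (number of clusters), prove the upper bound $O(m\log m)$ by pigeonhole, prove the lower bound $\Omega(m)$ by an adaptive adversary with an explicit commitment step, and dispatch the duplicate-free case $m=n$ separately via Fredman's $\Omega(\log n!)$ bound. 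Since $k=\Theta(m)$ when $m<n$, the two arguments are quantitatively equivalent; yours is more self-contained and makes the governing parameter explicit, while the paper's is shorter and avoids both the pigeonhole step and the $m=n$ edge case. One small correction to your justification: you are \emph{not} free to choose the cluster ordering, because the sequence of cluster sizes in value order is an isomorphism invariant of $\overrightarrow{G}$ (a DAG with cluster sizes $(1,2)$ in value order is not isomorphic to one with sizes $(2,1)$). Fortunately your construction never needs to reorder clusters: it suffices to choose \emph{which} $|V|$ of the $m$ value-ordered clusters receive touched positions and to assign the touched positions to them in the order matching the virtual total order, after which the untouched positions fill the remaining slots; this already realizes a DAG isomorphic to $\overrightarrow{G}$ with all touched pairs distinct.
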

\begin{proof}
    Let $\cA'$ and $\overrightarrow{G}$ be arbitrary.
    Let $k$ be the largest power of two for which there is an input $I$ with $\overrightarrow{G}(I) \cong \overrightarrow{G}$ so that $\cA$, when run on $I$, does not find a duplicate for this value of $k$.
    Then $\cA$ always finds a duplicate when sorting at most $2k$ elements,
    so $\daguniversal(\cA', \overrightarrow{G}) \in O(k \log k) \subseteq O(k \log n)$.
    On the other hand, since $\cA$ did not find a duplicate,
    there are $k$ distinct elements in $I$.
    There is a permutation $I'$ of $I$ such that
    the first $k/2$ comparisons by $\cA'$ happen among these $k$ elements.
    Then, $\overrightarrow{G}(I') \cong \overrightarrow{G}(I)$ and $\cA'$ does not find a duplicate in $I'$ after $k/2$ comparisons,
    so $k/2 \le \daguniversal(\cA', \overrightarrow{G})$.
\end{proof}
We show that \cref{lemm:upper_dag_universal} is tight: no algorithm is universally $o(\log n)$-competitive.

\begin{lemma}
  For every algorithm $\cA$ and every positive integer $n$, there is a DAG $\ora{G}$ and an algorithm $\cA'$
  such that $\daguniversal(\cA, \ora{G}) \in \Omega(n \log n)$ but $\daguniversal(\cA', \ora{G}) \in O(n)$.
\end{lemma}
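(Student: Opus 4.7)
The plan is to combine a Kahn--Saks style entropy-halving adversary with a clairvoyant linear-time selection algorithm. For any fixed $\cA$, I would simulate $\cA$ against an adversary that answers every comparison strictly (never ``$=$''), always choosing the answer that preserves at least half of the linear extensions of the current partial order $P$. After $k := \lfloor \tfrac{1}{2}\log_2(n!) \rfloor = \Theta(n\log n)$ rounds, the partial order $P_k$ still admits at least $\sqrt{n!} \ge 2$ linear extensions, so it is not yet total and there is an incomparable pair $(p,q)$ in $P_k$.

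Next, I would invoke the standard fact that any incomparable pair in a finite poset can be placed at consecutive positions in some linear extension: partition the other elements into $A$ (transitively below $p$ or $q$), $B$ (transitively above), and $C$ (incomparable to both), and concatenate topological orderings in the sequence $A, p, q, C, B$; incomparability of $(p,q)$ prevents any cross-block precedence conflict. Assigning distinct values $v_1 < \dots < v_n$ along this extension $L$ and then overwriting $I[q]$ with $I[p] = v_r$, where $r := L(p)$, produces an input $I$ whose induced DAG $\ora{G} := \ora{G}(I)$ has the structural type ``$n-2$ distinct elements plus one duplicate pair at rank $r$''. Since $\cA$'s first $k$ comparisons on $I$ receive exactly the adversary's strict answers, none of which is an equality, $\cA$ has not yet found a duplicate, and so $\daguniversal(\cA,\ora{G}) \ge k = \Omega(n\log n)$.

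For the matching upper bound I would design $\cA'$ tailored to $\ora{G}$: knowing $r$, it runs a deterministic linear-time selection algorithm to find a position holding the $r$-th smallest value in $O(n)$ comparisons, and then scans the remaining $n-1$ positions, comparing each against that value; the second copy must eventually surface as an equality, so $\daguniversal(\cA',\ora{G}) \in O(n)$. The hard part will be verifying that the modified $I$ is consistent with all $k$ strict answers already given: this reduces to observing that $(p,q)$ must never have been compared directly (otherwise the strict answer would have placed them in $P_k$), and that for every other $x$ previously compared with $q$, replacing $I[q]$ from $v_{r+1}$ to $v_r$ does not alter the sign of the comparison because $L(x) \notin \{r,r+1\}$ forces $v_{L(x)}$ to lie either below both $v_r$ and $v_{r+1}$ or above both.
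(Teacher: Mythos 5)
Your proof is correct, and it targets the same hard instance as the paper---one duplicate pair among $n-2$ distinct values, identified by its rank $r$ in the DAG---but it reaches the $\Omega(n\log n)$ lower bound by a genuinely different route. The paper simply instantiates its already-established \cref{theo:lower_median_recursion} (the binary-tree adversary of Section~\ref{sect:no_competitive}) with the graph having one cluster of size $2$ and $n-2$ singletons, whereas you build a self-contained adversary from the classical entropy-halving argument: preserve at least half of the surviving linear extensions for $\lfloor\tfrac12\log_2 n!\rfloor$ rounds, extract an incomparable pair $(p,q)$, realize it as the duplicate via a linear extension in which $p$ and $q$ are consecutive, and verify that collapsing $v_{r+1}$ to $v_r$ is consistent with every answer already given. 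Your consistency check is the right one and it goes through: $p$ and $q$ can never have been compared directly (a strict answer would have made them comparable), and every other element compared with $q$ has rank outside $\{r,r+1\}$, so its relation to $v_r$ and to $v_{r+1}$ agree. Your route is more elementary and independent of the paper's adversary machinery, at the price of re-proving a sorting-style lower bound plus the consecutive-incomparable-pair lemma; the paper's is shorter because it reuses \cref{theo:lower_median_recursion}. The upper bounds are essentially identical (quickselect at ranks $k$ and $k+1$ versus one deterministic selection at rank $r$ plus a linear scan), and both are valid for \emph{every} input realizing $\ora{G}$ because the rank of the duplicate pair is an isomorphism invariant of the DAG.
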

\begin{proof}
    Let $\cA$ and $n$ be arbitrary. Let $G$ be an undirected graph with $1$ cluster of size $2$ and $n-2$ clusters of size $1$.
    By \cref{theo:lower_median_recursion}, there is an input $I$ with $G(I) \cong G$ and $\runtime(\cA, I) \in \Omega(n \log n)$.
    Let $\ora{G} = \ora{G}(I)$ and
    let $k$ be the rank in $\ora{G}$ of the size-2 cluster.
    On the input $I$, the algorithm $\cA'$ runs quick select twice, with parameter $k$ and $k+1$, respectively.
    This takes $O(n)$ time and yields the elements in $I$ with rank $k$ and $k+1$, respectively.
    Then, $\cA'$ outputs these two elements.
\end{proof}

\section{Set Intersection}
We finally consider universal optimality and order optimality for the set intersection problem.
Recall that the input is two lists $A$ and $B$, each of size $n$,
and that $G(A, B)$ is a graph with $2n$ vertices.
We use \emph{$A$-vertices} and \emph{$B$-vertices} to refer to the $n$ vertices that correspond to elements in $A$
or $B$, respectively.
The connected components of $G(A, B)$ are cliques, which we call \emph{clusters}.
The \emph{$A$-size} and \emph{$B$-size} of a cluster is the number of $A$-vertices or $B$-vertices, respectively.
We denote by $\bG_{n, n}$ the set of all such graphs with such a partition into $A$-vertices and $B$-vertices.
Two graphs in $\bG_{n, n}$ are isomorphic if there is an isomorphism of undirected graphs that sends $A$-vertices
to $A$-vertices and $B$-vertices to $B$-vertices.
\[
  \text{For fixed }G \in \bG_{n, n},\text{ we define:} \quad \universal(\cA,G) = \max_{(A, B):\, G(A, B)\cong G}\; \runtime(\cA,A, B).
\]
We denote by $\ora{G}_{n, n}$ the set of all $2n$-vertex DAGs, with a partition into $A$-vertices and $B$-vertices, whose undirected complement lies $\bG_{n, n}$.
\[
    \text{For fixed }\ora{G} \in \ora{\bG}_{n, n},\text{ we define:} \quad \daguniversal(\cA,\ora{G}) = \max_{(A, B):\, \ora{G}(A, B)\cong \ora{G}}\; \runtime(\cA,A, B).
\]

We show that a simple doubling algorithm achieves tight bounds for the set intersection problem.
The algorithm $\cA$ works as follows:
Starting from $k=2$, pick $\min(|A|, k)$ elements from $A$ and $\min(|B|, k)$ from $B$ and sort them.
If there is a pair of equal elements from $A$ and $B$, output them, otherwise double $k$ and repeat.
\begin{lemma} \label{lemm:upper_set_intersection}
    The algorithm $\cA$ is $O(\log n)$-competitive,
    that is, there is a function $f \in O(\log n)$ such for all $G \in \bG_n$ and for every algorithm $\cA'$, we have
    \[
        \universal(\cA, G) \le f(n) \cdot \universal(\cA', G).
    \]
\end{lemma}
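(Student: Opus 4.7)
The plan is to mimic the proof of Lemma~\ref{lemm:upper_dag_universal}, adapting the ``hide the witness'' adversary from the single-list setting to the bipartite one. I fix arbitrary $\cA'$ and $G$, and let $k$ be the largest power of two for which there is an input $(A, B)$ with $G(A, B) \cong G$ such that $\cA$, at sampling parameter $k$, fails to output a pair of equal elements from $A$ and $B$. Since $\cA$ must terminate by the next doubling step $\le 2k$ and each iteration costs $O(k' \log k')$, summing the geometric series gives $\universal(\cA, G) \in O(k \log k) \subseteq O(k \log n)$.

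For the matching lower bound on $\cA'$, I fix the witness input $(A, B)$. Its $k$ sampled $A$-elements together with its $k$ sampled $B$-elements contain no cross-equal pair, so I call these $2k$ elements \emph{safe}. An adversarial-permutation argument exactly analogous to the one in Lemma~\ref{lemm:upper_dag_universal}---at most $k$ positions are touched in $k/2$ comparisons, while $k$ safe elements are available on each side---produces a permutation $(A', B')$ of $(A, B)$, still with $G(A', B') \cong G$, on which the first $k/2$ comparisons of $\cA'$ only touch safe positions, so $\cA'$ finds no cross-match in these comparisons. Provided $G$ has at least one cross-cluster, $\cA'$ cannot correctly terminate yet (outputting a match requires having seen one; outputting ``no'' would be incorrect), giving $\universal(\cA', G) \ge k/2$ and hence the claimed $O(\log n)$ ratio.

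It remains to handle the degenerate case where $G$ has no cross-cluster at all: here $\cA$ runs to $k = n$ at cost $O(n \log n)$, and I separately argue $\universal(\cA', G) \in \Omega(n)$ by observing that any correct set-intersection algorithm must touch either every $A$-element or every $B$-element in some comparison. Otherwise an untouched $a^* \in A$ and an untouched $b^* \in B$ could be overwritten with a common fresh value to obtain a new valid set-intersection input whose execution transcript under $\cA'$ is identical, but on which the correct answer flips from ``no'' to ``yes''---contradicting correctness on the modified input. Since each comparison touches two elements, this forces $\cA'$ to perform $\ge n/2$ comparisons, again yielding ratio $O(\log n)$. The main technical obstacle is making the adversarial-permutation simulation precise enough that the produced $(A', B')$ is a bona fide permutation of $(A, B)$ (so $G(A', B') \cong G$ genuinely holds), which relies on the slack between ``$k$ safe elements on each side'' and ``$\le k$ touched positions total'' as the adversary incrementally assigns safe $A$- and $B$-elements to newly-touched positions.
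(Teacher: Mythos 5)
Your proposal is correct and follows essentially the same route as the paper: the same choice of $k$ as the largest failing doubling parameter, the same $O(k\log k)$ geometric-sum upper bound, and the same adversarial-permutation argument confining the first $k/2$ comparisons of $\cA'$ to the intersection-free sampled elements so that $k/2 \le \universal(\cA', G)$. Your explicit treatment of the degenerate case where $G$ has no cross-cluster (via the untouched-element exchange argument giving $\universal(\cA', G) \ge n/2$) addresses a point the paper's proof leaves implicit, and is a welcome addition rather than a deviation.
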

\begin{proof}
    The proof closely resembles that of \cref{lemm:upper_dag_universal}.
    Let $\cA'$ and $G$ be arbitrary.
    Let $k$ be the largest power of two for which there is an input $(A, B)$ with $G(A, B) \cong G$ so that $\cA$, when run on $(A, B)$, does not find a set intersection for this value of $k$.
    Then $k \le |A| + |B|$, and $\cA$ always finds a set intersection
    when sorting at most $4k$ elements,
    so $\universal(\cA', G) \in O(k \log k) \subseteq O(k \log n)$.
    On the other hand, since $\cA$ did not find a set intersection,
    there are $\min(|A|, k)$ from $A$ and $\min(|B|, k)$ elements from $B$ with no set intersection. Since $|A| + |B| \ge k$, these are $\ge k$ elements in total.
    There is a permutation $A'$ of $A$ and a permutation $B'$ of $B$ such that
    the first $k/2$ comparisons by $\cA'$ happen among these $\ge k$ elements.
    Then, $G(A', B') \cong G(A, B)$ and $\cA'$ does not find a set intersection in $(A', B')$ after $k/2$ comparisons,
    so $k/2 \le \universal(\cA', G)$.
\end{proof}

\begin{lemma} \label{lemm:upper_order_set_intersection}
    The algorithm $\cA$ is also $O(\log n)$-order competitive.
    that is, there is a function $f \in O(\log n)$ such for all $\ora{G} \in \ora{\bG}_n$ and for every algorithm $\cA'$, we have
    \[
        \daguniversal(\cA, \ora{G}) \le f(n) \cdot \daguniversal(\cA', \ora{G}).
    \]
\end{lemma}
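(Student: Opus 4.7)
The plan is to mirror the proof of \cref{lemm:upper_set_intersection} almost verbatim, replacing the appeal to graph isomorphism with the stronger statement that permuting $A$ and $B$ independently yields an input with an isomorphic DAG. Let $\cA'$ and $\ora{G}$ be arbitrary. I would let $k$ be the largest power of two for which there is an input $(A, B)$ with $\ora{G}(A, B) \cong \ora{G}$ such that $\cA$, when run on $(A, B)$, does not discover an intersection during the iteration with this parameter. Such a $k$ exists and is at most $|A| + |B|$, since once $k$ exceeds the total input size, $\cA$ sorts the entire input and must find any intersection that exists.

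Next I would upper bound $\daguniversal(\cA, \ora{G})$. By the choice of $k$, on every input with DAG isomorphic to $\ora{G}$, the algorithm $\cA$ succeeds no later than the iteration with parameter $2k$, at which point it has sorted at most $4k$ elements in total across all iterations (by the geometric series), yielding $O(k \log k) \subseteq O(k \log n)$ comparisons.

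For the lower bound on $\daguniversal(\cA', \ora{G})$, I would fix the bad input $(A, B)$ from above. Since $\cA$ failed at parameter $k$, the $\min(|A|, k) + \min(|B|, k) \ge k$ specific elements it inspected contain no $A$--$B$ intersection among them. I would then simulate $\cA'$ on an as-yet undetermined permutation of $(A, B)$, noting which positions it reads during its first $k/2$ comparisons (touching at most $k$ elements in total). I choose permutations $A'$ of $A$ and $B'$ of $B$ that place these safe elements exactly at the positions $\cA'$ reads first. Since the relabeling of positions does not change the underlying ordered set, $\ora{G}(A', B') \cong \ora{G}(A, B) \cong \ora{G}$, and $\cA'$ cannot discover an intersection on $(A', B')$ within $k/2$ comparisons, giving $\daguniversal(\cA', \ora{G}) \ge k/2$.

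Combining the two bounds yields $\daguniversal(\cA, \ora{G}) \in O(\log n) \cdot \daguniversal(\cA', \ora{G})$, which is the claim with $f(n) \in O(\log n)$. The main subtlety is the permutation step: we must confirm that the safe elements on the $A$ side and on the $B$ side suffice to cover the first $k/2$ comparisons of $\cA'$ regardless of how $\cA'$ distributes its probes between the two lists. This follows from the bound $\min(|A|, k) + \min(|B|, k) \ge k$ together with a simple case distinction: if $|A| \ge k$ and $|B| \ge k$, then each side has $k$ safe elements and any query pattern of length $k/2$ fits; otherwise one of the lists is fully safe and the remaining $\ge k$ safe elements on the other side absorb the rest of the queries.
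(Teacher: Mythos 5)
Your proposal is correct and follows essentially the same route as the paper, which simply redefines $k$ with respect to DAG-isomorphic inputs and then says ``proceed as in the proof of Lemma~\ref{lemm:upper_set_intersection}.'' In fact, you spell out the one point the paper leaves implicit --- that independently permuting $A$ and $B$ preserves $\ora{G}(A,B)$ up to isomorphism and that the $\min(|A|,k)+\min(|B|,k)\ge k$ safe elements suffice to absorb the first $k/2$ comparisons of $\cA'$ on either side --- so no gap remains.
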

\begin{proof}
    Let $\cA'$ and $\ora{G}$ be arbitrary.
    Let $k$ be the largest power of two for which there is an input $(A, B)$ with $\ora{G}(A, B) \cong \ora{G}$ so that $\cA$, when run on $(A, B)$, does not find a set intersection for this value of $k$.
    Then, proceed as in the proof of \cref{lemm:upper_set_intersection}.
\end{proof}
We show that \cref{lemm:upper_set_intersection,lemm:upper_order_set_intersection} are tight: no algorithm is universally $o(\log n)$-competitive, or universally $o(\log n)$-order competitive.

\begin{lemma}
  For every algorithm $\cA$ and every positive integer $n$, there is a graph $G \in \bG_{n, n}$ and algorithm $\cA'$
  such that $\universal(\cA, G) \in \Omega(n \log n)$ but $\universal(\cA', G) \in O(n)$.
  Moreover, there is is a DAG $\ora{G}$ whose undirected complement is $G$ such that
  such that $\daguniversal(\cA, \ora{G}) \in \Omega(n \log n)$ but $\daguniversal(\cA', \ora{G}) \in O(n)$.
\end{lemma}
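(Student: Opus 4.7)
The plan is to lift the adversarial strategic-tree argument from the proof of \cref{theo:lower_competitive} to the bipartite setting of set intersection, using a single choice of graph that serves both the universal and the order-competitive statements. I take $G \in \bG_{n,n}$ to be the graph with exactly one cross-cluster of size two (one $A$-vertex $a^*$ and one $B$-vertex $b^*$) together with $2n-2$ singletons, and let $\ora{G}$ be the DAG whose undirected complement is $G$, with $a^*$ placed at a designated rank $k_A$ in $A$ and $b^*$ at a designated rank $k_B$ in $B$.

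For the lower bound on $\cA$, I play the strategic-tree game from Section~\ref{sec:no_loglogn} on the $2n$ pre-labeled slots (the $n$ $A$-slots and the $n$ $B$-slots), with the adversary never answering ``$=$'' for $M = \Theta(n \log n)$ rounds. The crucial step that was not present in the element-distinctness proof is to certify that after $M$ rounds there must exist an $A$-slot and a $B$-slot whose tree positions lie on a common root-to-leaf path: if no such pair existed, the $A$-positions and $B$-positions would be mutually tree-antichained and hence would have to live in two disjoint strict subtrees of the root, but packing $n$ positions into a strict subtree of the tree requires $\Omega(n\log n)$ total depth per side -- contradicting the adversary's $2M$ depth budget. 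This is the same depth-counting flavor used inside Claim~\ref{clai:isomorphic} and \cref{clai:xmas_magic}, adapted to the bipartite labeling. Once such a pair $(a^*, b^*)$ is found, the adversary pushes them both to a common unoccupied leaf (chosen, in the DAG version, so that the planted ranks coincide with $k_A$ and $k_B$), assigns every other slot to a distinct unoccupied leaf, and outputs the resulting input. This input is consistent with $\cA$'s observed answers, satisfies $G(A,B) \cong G$ and $\ora{G}(A,B) \cong \ora{G}$, and witnesses that $\cA$ has made $\Omega(n \log n)$ comparisons without finding the unique cross-intersection.

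For the upper-bound algorithm $\cA'$, the order-competitive version is a direct analogue of the algorithm in the element-distinctness order-competitive lemma: $\cA'$ reads $k_A$ and $k_B$ from $\ora{G}$, runs quickselect on $A$ and on $B$ at those two ranks in $O(n)$ time, and outputs the two selected elements. For the universal version, $\cA'$ does not have access to the ranks, so matching the $O(n)$ bound from only the undirected structure of $G$ is the main obstacle; my plan is to exploit the extreme sparsity of $G$ (all $2n-2$ non-cross vertices are singletons) so that a tailored linear-time scan can extract enough structural information to find the cross-pair. Verifying that this tailored $\cA'$ indeed runs in $O(n)$ on every input with $G(A,B) \cong G$ -- and adjusting the choice of $G$ if the sparse structure alone is not enough -- is the step I expect to require the most care.
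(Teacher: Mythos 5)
There are two genuine gaps here, and the first one is fatal to your choice of $G$.

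\textbf{The universal upper bound cannot hold for your $G$.} With $G$ equal to one cross-cluster $\{a^*,b^*\}$ plus $2n-2$ singletons, the isomorphism class of $G$ gives an algorithm $\cA'$ no information about \emph{which} $A$-element meets \emph{which} $B$-element; every algorithm faces the full ``find the unique intersecting pair'' problem, and essentially your own adversary argument (applied to $\cA'$ instead of $\cA$) shows $\universal(\cA',G)\in\Omega(n\log n)$ for \emph{every} $\cA'$. So no $O(n)$ clairvoyant algorithm exists for this $G$; ``adjusting the choice of $G$'' is not a loose end but the entire content of the lemma. The paper resolves this by constructing a family $G_1,\dots,G_{n^{1/3}}$ in which the identity of the intersecting cluster is encoded into the \emph{undirected} structure: there are $n^{1/3}$ clusters of distinct $A$-sizes $1,\dots,n^{1/3}$, and in $G_i$ only the cluster of $A$-size $i$ contains a $B$-vertex. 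The clairvoyant $\cA_i$ isolates that cluster by sorting $O(n^{2/3})$ elements of $A$ and then compares one representative against all of $B$, giving $O(n)$; the adversary retains the freedom to decide $j$ only at the end of the game, which is what forces $\cA$ to spend $\Omega(n\log n)$.

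\textbf{Your ``crucial step'' in the lower bound is false for the vanilla game.} If you run the strategic game of Section~\ref{sec:no_loglogn} symmetrically on all $2n$ slots, the algorithm can defeat it in $O(n)$ comparisons: compare $a_1$ with $b_1$ (sending $a_1$ left and $b_1$ right of the root), then compare each remaining $a_i$ with $b_1$ (each such comparison moves $a_i$ into the left subtree of the root), then each remaining $b_j$ with $a_1$ (moving $b_j$ into the right subtree). After $2n-2$ comparisons no $A$-position shares a root-to-leaf path with any $B$-position, so the adversary can never realize a cross-duplicate consistent with its answers. Your proposed certification via ``disjoint strict subtrees of the root require $\Omega(n\log n)$ total depth'' is incorrect --- placing $n$ elements one level down costs only $n$ total depth. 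This is exactly why the paper modifies the game for set intersection: the $A$-elements are pre-placed at leaves (partitioned among the depth-$\ell$ subtrees corresponding to the type-(1) clusters) and only $B$-elements move, so that after $\tfrac16 n\log n$ rounds some $B$-element still sits at depth at most $\ell$ and can be merged into whichever type-(1) cluster lies below it. Your order-competitive half (quickselect at the ranks read off $\ora{G}$) is fine and matches how the paper handles that part, but it does not rescue the universal statement.
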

\begin{proof}
    We first show the bounds on the universal running time.
    Let $\cA$ and $n$ be arbitrary.
    For $i \in [n^{1/3}]$, consider the graph $G_i \in \bG_{n, n}$ with:
    \begin{enumerate}[(1)]
        \item For $j \in [n^{1/3}]$, one cluster of $A$-size $j$ and $B$-size $\delta_{i, j}$, that is, $1$ if $i = j$ and $0$ otherwise.
        \item $n-1$ clusters of $A$-size $0$ and $B$-size $1$.
        \item One cluster of $A$-size $n - \Theta(n^{2/3})$ and $B$-size $0$.
    \end{enumerate}
    Note that the cluster of type (1) with $A$-size $i$ is the only cluster that contains both elements from $A$ and from $B$.
    The algorithm $\cA_i$ operates as follows:
    First compute the median of $A$ to find the cluster (3) and delete the corresponding elements from $A$.
    Then, sort the remaining $O(n^{2/3})$ elements of $A$.
    This partitions $A$ into the $n^{1/3}$ clusters of type (1).
    Among these, find the cluster of $A$-size $i$.
    Pick an arbitrary element from this cluster, and compare it to all elements in $B$ to find the set intersection.
    We have $\universal(\cA_i, G_i) \in O(n + n^{2/3} \log n) = O(n)$.

    Conversely, let $\cA$ be any algorithm.
    Without loss of generality, suppose $n^{1/3} = 2^{\ell}$ for some integer $\ell$.
    We consider an adaptive adversary that uses a modified version of the binary tree strategy:
    Consider a complete binary tree $T$ of height $n \log n$.
    The elements of $B$ all start out at the root of $T$.
    The elements of $A$ all start at a leaf of $T$:
    The elements in the cluster (3) start out at the leftmost leaf of $T$.
    (These elements are smaller than all other elements, and thus irrelevant for the rest of the proof.)
    For each cluster of type (1), pick a different node $u$ of depth $\ell$.
    The $A$-elements in this cluster start at the same arbitrary leaf in the subtree of $u$.
    The adversary answers comparisons by moving elements in $B$ down the tree,
    as described in Section~\ref{sect:no_competitive}.

    We play the adversarial game with $\cA$ for $\frac{1}{2} n \cdot \ell = \frac{1}{6} n \log n$ rounds.
    Then, there is an element $x \in A$ that is at a tree node of depth $\le \ell$.
    In particular, in the subtree of $x$, there is a leaf that contains a cluster of type (1), say the one with index $j$.
    We add $x$ to this cluster, and turn all other elements of $A$ into clusters of size 1.
    This defines a graph $G$ on $X$ with $G \cong G_j$, where each cluster consists of elements on the same root-to-leaf path of $T$.
    Thus, there is an input $(A, B)$ with $G(A, B) \cong G_j$ such that the answers given by the adversary are consistent with $(A, B)$.
    It follows that $\cA$ did not find a set intersection after $\frac{1}{6} n \log n$ comparisons,
    even though there is one. Therefore, $\universal(\cA, G_j) \ge \frac{1}{6} n \log n$.
    We put $\cA' = \cA_j$, then $\universal(\cA', G_j) \in O(n)$ as noted earlier. This shows the universal optimality part of Lemma.

    Finally, we show the bounds on the order running time. Put $\ora{G} = \ora{G}(A, B)$.
    Then, $\daguniversal(\cA, \ora{G}) \in \Omega(n \log n)$ since $\cA$ does at least $\frac{1}{6} n \log n$ comparisons on $(A, B)$.
    On the other hand, $\daguniversal(\cA', \ora{G}) \le \universal(\cA', \ora{G}) \in O(n)$,
    since we take the maximum over a smaller subset of inputs. This shows the Lemma.
\end{proof}


\bibliographystyle{ACM-Reference-Format}
\bibliography{refs}



\end{document}